\renewcommand{\paragraph}{%
  \@startsection{paragraph}{4}{\z@}%
  {0.25\baselineskip}  %
  {-0.25em}            %
  {\normalfont\normalsize\bfseries}%
}
\newtheorem{theorem}{Theorem}
\algnewcommand\algorithmicinput{\textbf{Input:}}
\algnewcommand\algorithmicoutput{\textbf{Output:}}
\algnewcommand\algorithmicnote{\textbf{Note:}}
\algnewcommand\Input{\item[\algorithmicinput]}%
\algnewcommand\Output{\item[\algorithmicoutput]}%
\algnewcommand\Note{\item[\algorithmicnote]}%
\DeclareMathOperator{\diag}{diag} 
\DeclareMathOperator{\sym}{\text{sym}}
\DeclareMathOperator{\tr}{\text{tr}}
\title{Stochastic Subspace via Probabilistic Principal Component Analysis
  for Characterizing Model Error}
\author{{\hspace{1mm}Akash Yadav} \\
  University of Houston\\
  \texttt{ayadav4@uh.edu} \\
  \And
  {\hspace{1mm}Ruda Zhang} \\
  University of Houston\\
  \texttt{rudaz@uh.edu} \\
}
\date{}
\begin{document}
\maketitle

  \begin{abstract}
    This paper proposes a probabilistic model of subspaces based on
    the probabilistic principal component analysis (PCA).
    Given a sample of vectors in the embedding space---commonly known as a snapshot matrix---this method
    uses quantities derived from the probabilistic PCA to construct distributions of
    the sample matrix, as well as the principal subspaces.
    It is applicable to projection-based reduced-order modeling methods,
    such as proper orthogonal decomposition and related model reduction methods.
    The stochastic subspace thus constructed can be used, for example,
    to characterize model-form uncertainty in computational mechanics.
    The proposed method has multiple desirable properties:
    (1) it is naturally justified by the probabilistic PCA
    and has analytic forms for the induced random matrix models;
    (2) it satisfies linear constraints, such as boundary conditions of all kinds, by default;
    (3) it has only one hyperparameter, which significantly simplifies training; and
    (4) its algorithm is very easy to implement.
    We demonstrate the performance of the proposed method  
    via several numerical examples in computational mechanics and structural dynamics.
\end{abstract}
\keywords{ Model error \and Model-form uncertainty \and Stochastic reduced-order modeling
  \and Probabilistic principal component analysis \and Stochastic proper orthogonal decomposition}

\section{Introduction}

Engineering systems can often be described by a set of partial
differential equations, commonly referred to as governing equations.
These governing equations are typically derived by making assumptions
and simplifying the underlying physical process,
which inevitably introduces errors \cite{Oberkampf2002,Roy2011}.
The errors are further manifested via numerical approximations, imprecise initial and boundary conditions,
unknown model parameters,
and errors in the differential equations themselves.
The discrepancy between model predictions and ground truth is referred to as model error.
Model error (also known as model discrepancy \cite{Kennedy2001}, model inadequacy \cite{Kennedy2001},
structural uncertainty \cite{Trucano2006}, or model-form error \cite{Roy2011} in various contexts) is ubiquitous in computational engineering,
but its probabilistic analysis has been a notoriously difficult challenge.
Unlike parametric uncertainties that are associated with model parameters,
model-form uncertainties concern the variability of the model itself,
which are inherently nonparametric.
\citet{Pernot2017} argues that model-form uncertainty should not be absorbed in parametric uncertainty as it leads to prediction bias.
The analysis of model-form uncertainty can be divided into two sub-problems:
1) characterization and 2) correction.
Determining the predictive error of an inadequate model
is categorized as \textit{characterization}.
Reducing the predictive error of an inadequate model via some form of adjustment
is called \textit{correction}, which is much more complex than characterization.
There are various studies in the literature that address model error.
The approaches can be divided into two categories: 1) direct representation and 2) indirect representation.

The methods based on direct representation are mainly subdivided into two categories,
depending on where model errors are addressed: 1) external, and 2) internal \cite{He2016}.
In \textit{external direct representation} methods,
model error is accounted for by adding a correction term to the model output,
which is usually calibrated using a Gaussian process to match the observed data.
One of the earliest studies to address model error
was carried out by \citet{Kennedy2001} in 2001,
in which the authors presented a Bayesian calibration technique
as an improvement on the traditional techniques.
Since then, the Bayesian inferential and modeling framework has been adopted
and further developed by many studies \cite{Higdon2004,Higdon2008,Bayarri2007,Qian2008,Gramacy2008}.
KOH Bayesian calibration corrects the model output
but has limited extrapolative capability to unobserved quantities \cite{Maupin2020}.
\citet{Farrell2015} present an adaptive modeling paradigm for Bayesian calibration, and model selection,
where the authors developed an Occam-Plausibility algorithm to select the best model with calibration.
However, if the initial set of models is invalid, then another set of models is required,
which requires a lot of prior information and modeling experience.
In general, methods using external direct representation
tend to violate physical constraints \cite{Strong2014,Sargsyan2015},
rely heavily on prior information \cite{Brynjarsdottir2014},
fail to elucidate model error and data error, and
struggle with extrapolation and prediction of unobserved quantity \cite{Bayarri2007,Oliver2015,Pernot2017}.
To alleviate some of the drawbacks, \citet{Brynjarsdottir2014} and \citet{Plumlee2017}
show that prior does play an important role in improving performance.
\citet{He2016} proposed a general physics-constrained model correction framework,
ensuring that the corrected model adheres to physical constraints.
Despite all the developments,
methods based on external direct representation have several limitations, as listed above,
mainly because model error is inherently nonparametric.

\textit{Internal direct representation} methods improve a model through intrusive modifications,
rather than by adding an extra term to the output.
Embedded model enrichment \cite{Oliver2015} provides a framework for extrapolative predictions using composite models.
\citet{Morrison2018} address model inadequacy with stochastic operators
introduced as a source in the governing equations while preserving underlying physical constraints.
Similarly, \citet{Portone2022} propose a stochastic operator
for an unclosed dispersion term in an averaged advection--diffusion equation
for model closure.
However, modifying the governing equations of the model brings
challenges in the computational cost and implementation.
Additionally, significant prior knowledge of the system is required
to alter the governing equations and achieve improvements effectively.
\citet{Bhat2017} address model-form error by
embedding dynamic discrepancies within small-scale models,
using Bayesian smoothing splines
(BS-ANOVA) framework.
\citet{Strong2014} introduce a novel way to quantify model inadequacies
by decomposing the model into sub-functions and addressing discrepancies in each sub-function.
The authors show the application of their method in health economics studies.
\citet{Sargsyan2019} improve uncertainty representation and quantification in computational models
by introducing a Bayesian framework with embedded model error representation using polynomial chaos expansion.
In general, internal direct representation methods
require a deep understanding of the system to implement the changes necessary for improvement.
This knowledge is not always available and often demands an extensive study of the system.

An approach based on \textit{indirect representation} is presented by \citet{Sargsyan2015},
where the authors aim to address model error at the source by hyperparameterizing the parameters of assumptions.
However, it is not always easy to pinpoint the source of the errors
especially when errors arise from multiple sources.
Furthermore, this approach does not correct for model error.
Another indirect representation approach to address model error is based on the concept of
stochastic reduced order modeling, originally introduced by Soize in the early 2000s
\cite{Soize2000, Soize2001, Soize2005} and reviewed in \cite{Soize2005npm}.
In their earlier approaches,
the authors build the reduced-order model (ROM) via projection onto a deterministic subspace and
construct stochastic ROMs by randomizing the coefficient matrices using the principle of maximum entropy.
The term ``stochastic reduced order model'' is also used by a method developed around 2010
in a different context \cite{Grigoriu2009, Grigoriu2012, Warner2013},
which is essentially a discrete approximation of a probability distribution
and can be constructed non-intrusively from a computational model.
Other approaches, such as random field finite elements and probabilistic finite element methods \cite{Liu1986a,Liu1986}, have been developed for handling parametric uncertainty in material properties and loads. While they do not explicitly account for model error, extensions incorporating model discrepancy terms could potentially address this limitation.

A recent work by \citet{Soize2017srob} in model-form uncertainty merges ideas
from projection-based reduced-order modeling \cite{Benner2015,ZhangRD2022gps} and random matrix theory \cite{Mehta2004,Edelman2005}.
In the new approach, instead of randomizing coefficient matrices, the reduced-order basis (ROB) is randomized.
The key observation is that since the ROB determines the ROM,
randomizing the ROB also randomizes the ROM,
which can be used for efficient probabilistic analysis of model error.
Using computationally inexpensive ROM instead of a full-order model
makes the approach computationally tractable for uncertainty quantification tasks.
The probabilistic model and estimation procedure proposed in \cite{Soize2017srob}
have been applied to various engineering problems \cite{WangHR2019,Soize2019srob}.
Because this method can face challenges due to a large number of hyperparameters and complex implementation,
later developments have aimed at reducing the number of hyperparameters and simplifying hyperparameter training \cite{Soize2019srob,Azzi2022}.
In settings with multiple candidate models, \citet{ZhangH2023} proposed a different probabilistic model for ROBs,
and has been used for various problems in computational mechanics \cite{Zhang2024,Quek2025}.

This paper proposes a novel framework for probabilistic modeling of principal subspaces.
The stochastic ROMs constructed from such stochastic subspaces enable
improved characterization of model uncertainty in computational mechanics.
The main contributions of our work are as follows:
\begin{enumerate}
\item We introduce a new class of probabilistic models of subspaces,
    which is simple, has nice analytical properties and can be sampled efficiently.
    This is of general interest, and can be seen as a stochastic version of the proper orthogonal decomposition.
\item We use this stochastic subspace model---instead of a probabilistic model of the reduced-order basis---to construct stochastic ROMs,
    and reduce the number of hyperparameters to one.
\item The optimization of the hyperparameter is fully automated
  and done efficiently by exploiting one-dimensional optimization algorithms.
\item We demonstrate in a series of numerical experiments that our method provides consistent
  and sharp uncertainty estimates, with low computational costs.
\end{enumerate}

This paper is organized as follows:
A brief introduction to stochastic reduced-order modeling is presented in \cref{sec:SROM}.
The proposed stochastic subspace model, along with the algorithm, is described in \cref{sec:PPCA}.
Critical information regarding the hyperparameter optimization is given in \cref{sec:training}.
Related works in the literature are reviewed in \cref{sec:related}.
The accuracy and efficiency of the proposed method are validated using numerical examples in \cref{sec:examples}.
Finally, \cref{sec:conclusion} concludes the paper with a brief summary and potential future work.

\section{Stochastic reduced-order modeling}\label{sec:SROM}
This section introduces the concept of stochastic reduced-order modeling. 
We begin in \cref{sec:HDM} with the high-dimensional model (also known as the full-order model),
followed in \cref{sec:ROM} by its reduction to a lower-order representation. 
Building on this, \cref{sec:sub-SROM} presents a stochastic version of the reduced-order model.
The framework is presented in a general form, making it broadly applicable to various engineering systems.

\subsection{High-dimensional model}\label{sec:HDM}

In general, we consider a parametric nonlinear system
given by a set of ordinary differential equations (ODEs):
\begin{equation}\label{eq:HDM}
  \dot{\mathbf{x}} = \mathbf{f}(\mathbf{x}, t; \boldsymbol{\mu}),
\end{equation}
with $\mathbf{x} \in \mathbb{R}^n$ and $t \in [0, \infty)$,
and is subject to initial conditions $\mathbf{x}(0) = \mathbf{x}_0$
and linear constraints $\mathbf{B}^\intercal \mathbf{x} = 0$,
where $\mathbf{B} \in \mathbb{R}^{n \times n_{CD}}$.
Such constraints can capture, for example, boundary conditions.
If nonhomogeneous linear constraints are present, they can be transformed into homogeneous constraints by shifting the system states, without loss of generality.
These equations often come from a spatial discretization
of a set of partial differential equations
that govern a given physical system, with dimension $n \gg 1$.
We call this the \textit{high-dimensional model} (HDM).
If the system is time-independent, \cref{eq:HDM} reduces to a set of algebraic equations:
$\mathbf{f}(\mathbf{x}; \boldsymbol{\mu}) = 0$. 
The HDM does not fully represent the physical system due to approximations introduced at various stages of modeling and simulation. 
This difference is the target for model error analysis.

\subsection{Reduced-order model}\label{sec:ROM}

Reduced-order models (ROMs) can be constructed from the HDM to provide faster solutions
while maintaining accuracy.
A general approach to ROM construction is the Galerkin projection,
which projects the HDM orthogonally onto a $k$-dim subspace $\mathscr{V}$
of the state space $\mathbb{R}^n$.
This reduced subspace $\mathscr{V}$ can be obtained, for example,
using proper orthogonal decomposition (POD) \cite{Sirovich1987} applied to a snapshot matrix of HDM solutions, or from the eigendecomposition of the system operators \cite{Chopra1996}.
For POD, the reduced order $k$ is typically chosen by specifying a threshold for the fraction of total energy retained by the POD modes.
Let $\mathbf{V} \in \text{St}(n, k)$ be an orthonormal basis of the subspace $\mathscr{V}$,
then the ROM can be written as:
\begin{equation}\label{eq:ROM}
  \mathbf{x} = \mathbf{V} \mathbf{q},
  \quad
  \dot{\mathbf{q}} = \mathbf{V}^\intercal \, \mathbf{f}(\mathbf{V} \mathbf{q}, t; \boldsymbol{\mu}),
\end{equation}
with $\mathbf{q} \in \mathbb{R}^k$
and initial conditions $\mathbf{q}(0) = \mathbf{V}^\intercal \mathbf{x}_0$.
To satisfy the linear constraints, we must have $\mathbf{B}^\intercal \mathbf{V}= 0$. 
Because POD bases are typically constructed from snapshot data, the reduced-order model will inherently satisfy any linear constraints (homogeneous or non-homogeneous) present in the snapshots. 
For time-independent systems, \cref{eq:ROM} reduces to a set of algebraic equations:
$\mathbf{V}^\intercal \, \mathbf{f}(\mathbf{V} \mathbf{q}; \boldsymbol{\mu})$ = $0$. 
The ROM deviates from the HDM due to model truncation,
which is also a source of model error that requires analysis.

\subsection{Stochastic reduced-order model}\label{sec:sub-SROM}

The stochastic reduced-order model (SROM) builds upon the structure of ROM with the difference that
the deterministic ROB $\mathbf{V}$ is replaced by its stochastic counterpart $\mathbf{W}$.
The change from a deterministic basis to a stochastic one
introduces randomness into the model and allows it to capture
model errors between the ROM, the HDM, and the physical system.
The SROM can be written as:
\begin{equation}\label{eq:SROM}
  \mathbf{x} = \mathbf{W} \mathbf{q},
  \quad \dot{\mathbf{q}} = \mathbf{W}^\intercal \, \mathbf{f}(\mathbf{W} \mathbf{q}, t; \boldsymbol{\mu}),
  \quad \mathbf{W} \sim \mu_{\mathbf{V}}
\end{equation}
where stochastic basis $\mathbf{W}$ follows a probability distribution $\mu_{\mathbf{V}}$.
However, it is more natural to impose a probability distribution $\mu_{\mathscr{V}}$ on the subspace rather than the basis,
  because the Galerkin projection is uniquely determined by the subspace and is invariant under changes of basis.
In this work, we use the stochastic subspace model introduced in \cref{sec:PPCA}
to sample stochastic basis $\mathbf{W}$ and construct SROM, which is then used to characterize model error.

\section{Stochastic subspace model} \label{sec:PPCA}

This section presents the formulation of the stochastic subspace model using probabilistic principal component analysis.
The derivation of a deterministic principal subspace via proper orthogonal decomposition is outlined in \cref{sec:subspace-PCA}.
The procedure of constructing random matrices is described in \cref{sec:random-matrices-ppca}.
The stochastic subspace model and the sampling algorithm are discussed in \cref{sec:stochastic-subspace-ppca}.

\subsection{Subspace from the PCA: proper orthogonal decomposition} \label{sec:subspace-PCA}

Principal component analysis (PCA) is a common technique for dimension reduction.
Closely related to PCA, a common way to find the ROB $\mathbf{V}$
and the associated subspace $\mathscr{V}$ is the \textit{proper orthogonal decomposition} (POD).
Following the notations in \cref{sec:SROM},
let $\mathbf{X} = [\mathbf{x}_1 \, \cdots \, \mathbf{x}_m] \in \mathbb{R}^{n \times m}$
be a sample of the state, with $\mathbf{x}_i = \mathbf{x}(t_i; \boldsymbol{\mu}_i)$,
sample mean $\overline{\mathbf{x}} = \frac{1}{m} \sum_{i=1}^m \mathbf{x}_i$,
centered sample $\mathbf{X}_0 = \mathbf{X} - \overline{\mathbf{x}} \mathbf{1}_m^\intercal$,
and sample covariance $\mathbf{S} = \frac{1}{m} \mathbf{X}_0 \mathbf{X}_0^\intercal$.
Let $\mathbf{X}_0 = \mathbf{V}_r \diag(\boldsymbol{\sigma}_r) \mathbf{W}_r^\intercal$
be a compact singular value decomposition (SVD), where
$\boldsymbol{\sigma}_r \in \mathbb{R}^r_{>0 \downarrow}$ is in non-increasing order.
The sample covariance matrix $\mathbf{S}$ can be written as
$\mathbf{S} = \mathbf{V}_r \diag(\boldsymbol{\sigma}_r / \sqrt{m})^2 \mathbf{V}_r^\intercal$.
POD takes the \textit{principal basis} $\mathbf{V}_k$---the leading $k$ eigenvectors of $\mathbf{S}$---as the deterministic ROB.
The corresponding subspace is the \textit{principal subspace} $\mathscr{V}_k := \text{range}(\mathbf{V}_k)$.
Since all state samples satisfy the linear constraints,
$\mathbf{V}_k$ satisfies the constraints automatically.
POD has several other desirable properties: it can extract coherent structures, provide error estimates,
and is computationally efficient and straightforward to implement \cite{Sirovich1987}.

\subsection{Random matrices from the PPCA}\label{sec:random-matrices-ppca}

Probabilistic PCA (PPCA) \citep{Tipping1999}
reformulates PCA as the maximum likelihood solution
of a probabilistic latent variable model.
Here, we briefly overview PPCA and introduce some related random matrix models.

Assume that the data-generating process can be modeled as:
$\mathbf{x} = \boldsymbol{\mu} + \mathbf{U} \mathbf{z} + \boldsymbol{\epsilon}$,
where $\boldsymbol{\mu} \in \mathbb{R}^n$, $\mathbf{U} \in \mathbb{R}^{n \times k}$,
$\mathbf{z} \sim N_k(0, \mathbf{I}_k)$, $\boldsymbol{\epsilon} \sim N_n(0, \varepsilon^2 \mathbf{I}_n)$,
$n \in \mathbb{Z}_{>0}$ is the ambient dimension, and $k \in \{1, \cdots, n\}$ is the latent dimension.
Then the observed data follows a Gaussian distribution $\mathbf{x} \sim N_n(\boldsymbol{\mu}, \mathbf{C})$,
where covariance matrix $\mathbf{C} = \mathbf{U} \mathbf{U}^\intercal + \varepsilon^2 \mathbf{I}_n$.
The corresponding log-likelihood for $m$ samples 
$\{\mathbf{x}_i\}_{i=1}^m$ with sample covariance 
$\mathbf{S} = \frac{1}{m} \sum_{i=1}^m (\mathbf{x}_i - \boldsymbol{\mu})(\mathbf{x}_i - \boldsymbol{\mu})^\top$ 
is:
$L = -\frac{m}{2} \left[ n\ln(2\pi) + \ln|\mathbf{C}| + \mathrm{tr}(\mathbf{C}^{-1} \mathbf{S}) \right]$.

Assume that we have a data sample
$\mathbf{X} = [\mathbf{x}_1 \, \cdots \, \mathbf{x}_m] \in \mathbb{R}^{n \times m}$ of size $m$.
Let $\mathbf{S} = \mathbf{V} \diag(\boldsymbol{\lambda}) \mathbf{V}^\intercal$
be an eigenvalue decomposition (EVD) of the sample covariance matrix,
where $\mathbf{V} \in O(n)$ is an orthogonal matrix and
$\boldsymbol{\lambda} \in \mathbb{R}^n_{\ge 0 \downarrow}$ is in non-increasing order.
The maximum likelihood estimator of the model parameters $(\boldsymbol{\mu}, \mathbf{U}, \varepsilon^2)$ are:
$\widetilde{\boldsymbol{\mu}} = \overline{\mathbf{x}}$,
$\widetilde{\mathbf{U}} = \mathbf{V}_k [\diag(\lambda_i - \tilde{\varepsilon}^2)_{i=1}^k]^{1/2} \mathbf{Q}$,
and $\tilde{\varepsilon}^2 = \frac{1}{n-k} \sum_{i=k+1}^n \lambda_i$,
where $\mathbf{V}_k = [\mathbf{v}_1 \, \cdots \, \mathbf{v}_k]$ consists of the first $k$ columns of $\mathbf{V}$,
and $\mathbf{Q} \in O(k)$ is any order-$k$ orthogonal matrix.
The parameters $\widetilde{\mathbf{U}}$, $\tilde{\varepsilon}^2$  are obtained by iterative maximization of $L$ via the EM algorithm described in \cite{Tipping1999}.
The estimated covariance matrix can thus be written as
$\widetilde{\mathbf{C}} = \mathbf{V}_k \diag(\lambda_i - \tilde{\varepsilon}^2)_{i=1}^k \mathbf{V}_k^\intercal + \tilde{\varepsilon}^2 \mathbf{I}_n$.
We call the columns of $\mathbf{V}_k$ \textit{principal components} of the data sample
and $\boldsymbol{\lambda}_k = (\lambda_i)_{i=1}^k$ the corresponding \textit{principal variances}.

The PPCA assumes that the latent dimension $k$ is known,
and estimates the unknown data distribution $\mathbf{x} \sim N_n(\boldsymbol{\mu}, \mathbf{C})$
by maximum likelihood as $\widetilde{\mathbf{x}} \sim N_n(\widetilde{\boldsymbol{\mu}}, \widetilde{\mathbf{C}})$.
Without assuming $k$ as known, the data distribution can simply be modeled as
$\widetilde{\mathbf{x}} \sim N_n(\overline{\mathbf{x}}, \mathbf{S})$,
the multivariate Gaussian distribution parameterized by the sample mean and the sample covariance.
From this empirical data distribution, we can derive distributions of some related matrices,
all in analytical form.

First, the data sample matrix has a matrix-variate Gaussian distribution
$\widetilde{\mathbf{X}} = [\widetilde{\mathbf{x}}_1 \, \cdots \, \widetilde{\mathbf{x}}_m]
\sim N_{n,m}(\overline{\mathbf{x}}; \mathbf{S}, \mathbf{I}_m)$,
where the columns are independent and identically distributed as the empirical data distribution
$\widetilde{\mathbf{x}}_i \overset{\text{iid}}{\sim} N_n(\overline{\mathbf{x}}, \mathbf{S})$,
$i = 1, \cdots, m$.
The orientation of the data sample matrix of size $k$ has a
matrix angular central Gaussian (MACG) distribution~\cite{Chikuse2003}:
\begin{equation}\label{eq:MACG-St}
  \widetilde{\mathbf{V}}_k := \pi(\widetilde{\mathbf{X}}_{0,k}) \sim \text{MACG}_{n,k}(\mathbf{S}),
\end{equation}
where $\widetilde{\mathbf{X}}_{0,k} \sim N_{n,k}(0; \mathbf{S}, \mathbf{I}_k)$
is a centered $k$-sample matrix
and $\pi(\mathbf{M}) := \mathbf{M} (\mathbf{M}^\intercal \mathbf{M})^{-1/2}$
is \textit{orthonormalization by polar decomposition}.
As a mapping, $\pi: \mathbb{R}^{n \times k}_* \mapsto \text{St}(n, k)$
is uniquely defined for all full column-rank $n$-by-$k$ matrices, $k \le n$,
and takes values in the \textit{Stiefel manifold}
$\text{St}(n, k) := \{\mathbf{V} \in \mathbb{R}^{n \times k} :
\mathbf{V}^\intercal \mathbf{V} = \mathbf{I}_k\}$.
The range (i.e., column space) of the data sample matrix of size $k$
also has an MACG distribution:
\begin{equation}
  \widetilde{\mathscr{V}}_k := \text{range}(\widetilde{\mathbf{V}}_k)
  \sim \text{MACG}_{n,k}(\mathbf{S}),
\end{equation}
which is supported on the \textit{Grassmann manifold} $\text{Gr}(n, k)$
of $k$-dim subspaces of the Euclidean space $\mathbb{R}^n$.
Its probability density function (PDF) has a unique mode---and therefore
a unique global maximal point---at the principal subspace $\mathscr{V}_k$, see \ref{apd:mode-proof}.
This distribution is useful, for example, for modeling subspaces for parametric ROM \cite{ZhangRD2022gps}.

\subsection{Stochastic subspace via the PPCA}\label{sec:stochastic-subspace-ppca}

Here we propose a new class of stochastic subspace models $\text{MACG}_{n,k,\beta}(\mathbf{S})$
with $\beta \in [k, \infty)$.
The classical $\text{MACG}_{n,k}(\mathbf{S})$ distribution is a special case with $\beta = k$.

Define the \textit{principal subspace map} $\pi_k(\mathbf{X}) := \text{range}(\mathbf{U}_k)$,
where $\mathbf{U}_k$ consists of singular vectors associated with
the $k$ largest singular values of $\mathbf{X}$.
As a mapping, $\pi_k: \mathbb{R}^{n \times m}_{k>} \mapsto \text{Gr}(n, k)$ is uniquely defined
for all $n$-by-$m$ matrices whose $k$-th largest singular value is larger than the $(k+1)$-th,
with $k \le \min(n, m)$.
Using this map, the POD subspace can be written as:
  $\mathscr{V}_k = \pi_k(\mathbf{X}_0)$.
We define a stochastic subspace model:
\begin{equation}\label{eq:MACG-nkp}
  \mathscr{W} := \pi_k(\widetilde{\mathbf{X}}_{0,\beta})
  \overset{\text{def}}{\sim} \text{MACG}_{n,k,\beta}(\mathbf{S}),
\end{equation}
where $\widetilde{\mathbf{X}}_{0,\beta} \sim N_{n,\beta}(0; \mathbf{S}, \mathbf{I}_{\beta})$
is a centered $\beta$-sample matrix and $\beta \in \{k, k+1, \cdots\}$ is the resample size.
Considering its close connection with the POD, we may call this model a \textit{stochastic POD}.
Similar to $\text{MACG}_{n,k}$,
the unique mode and global maximal of $\text{MACG}_{n,k,\beta}$
is the principal subspace $\mathscr{V}_k$.
The hyperparameter $\beta$ controls the concentration of the distribution:
larger value means less variation around $\mathscr{V}_k$.
As with POD, the stochastic basis $\mathbf{W}$ sampled from the stochastic subspace model
defined in \cref{eq:MACG-nkp} satisfies the linear constraints
$\mathbf{B}^\intercal \mathbf{W} =0$ automatically.

Sampling from this model
can be done very efficiently for $n \gg 1$
if $k$ and the rank of $\mathbf{S}$ are small,
which is often the case in practical applications.
Let $\mathbf{S} = \mathbf{V}_r \diag(\boldsymbol{\lambda}_r) \mathbf{V}_r^\intercal$
be a compact EVD where $r = \text{rank}(\mathbf{S})$.
We can show that:
\begin{equation}\label{eq:MACG-nkp-low-rank}
  \mathscr{W} = \mathbf{V}_r\, \pi_k\left(\diag(\boldsymbol{\lambda}_r)^{1/2}\,
    \mathbf{Z}_{r \times \beta}\right),
\end{equation}
where $\mathbf{Z}_{r \times \beta}$ is an $r$-by-$\beta$ standard Gaussian matrix.
\Cref{alg:ss-ppca} gives a procedure for sampling $\text{MACG}_{r,k,\beta}(\mathbf{S})$
with $\mathbf{S} = \diag(\mathbf{s})^2$,
which in effect implements the map $\pi_k(\diag(\mathbf{s})\,\mathbf{Z}_{r \times \beta})$.
For a general covariance matrix $\mathbf{S}$, an orthonormal basis of a random subspace
sampled from $\text{MACG}_{n,k,\beta}(\mathbf{S})$ can be obtained by
left multiplying the output of \Cref{alg:ss-ppca} with $\mathbf{V}_r$.
Using \cref{eq:MACG-nkp-low-rank} instead of \cref{eq:MACG-nkp}
reduces the computational cost for truncated SVD from $O(n k \beta)$ to $O(r k \beta)$.

\alglanguage{pseudocode}
\begin{algorithm}[t]
  \caption{\texttt{SS-PPCA}: Stochastic subspace via probabilistic principal component analysis.}
  \label{alg:ss-ppca}
  \begin{algorithmic}[1] %
    \Input scale vector $\mathbf{s} \in \mathbb{R}^r_{>0}$;
    subspace dimension $k \in \{1, \cdots, r\}$;
    resample size $\beta \in \{k, k+1, \cdots\}$.
    \State Generate $\mathbf{Z}_{r \times \beta} \in \mathbb{R}^{r \times \beta}$ with entries
    $z_{ij} \overset{\text{iid}}{\sim} N(0, 1)$
    \State $\mathbf{M} \gets \diag(\mathbf{s})\, \mathbf{Z}_{r \times \beta}$
    \State Truncated SVD: $[\mathbf{U}_k, \mathbf{d}_k, \mathbf{V}_k] \gets \text{svd}(\mathbf{M}, k)$
    \Output $\mathbf{W} = \mathbf{U}_k$,
    an orthonormal basis of a random subspace sampled from $\text{MACG}_{r,k,\beta}(\mathbf{S})$
    with $\mathbf{S} = \diag(\mathbf{s})^2$.
  \end{algorithmic}
\end{algorithm}

We can generalize the concentration hyperparameter $\beta$ to real numbers in $[k, \infty)$,
which leads to a continuous family of distributions $\text{MACG}_{n,k,\beta}(\mathbf{S})$
and can be useful when the optimal value of $\beta$ is not much greater than one.
In this case, we sample $\mathbf{Z} \in \mathbb{R}^{r \times \lceil \beta \rceil}$,
where $\lceil \beta \rceil$ denotes the smallest integer greater than or equal to $\beta$.
To account for the effect of real-valued $\beta$,
the weight of the final column of $\mathbf{Z}$ is set to $\beta-\lfloor \beta\rfloor$,
where $\lfloor \beta \rfloor$ is the largest integer less than or equal to $\beta$.
The rest of the steps are the same as \cref{alg:ss-ppca}.

\subsection{Construction of SROM using SS-PPCA}
The main steps of the proposed approach to stochastic reduced-order modeling are listed below:
\begin{enumerate}
\item \textbf{Solve HDM and collect snapshots:} Run the HDM and store state vectors at selected times and/or parameters to form the snapshot matrix $\mathbf{X} = [\mathbf{x}_1, \ldots, \mathbf{x}_m]$.
\item \textbf{Extract deterministic ROB:} Center the snapshots $\mathbf{X}_0 = \mathbf{X} - \overline{\mathbf{x}} \mathbf{1}_m^\intercal$, where $\overline{\mathbf{x}} =\frac{1}{m} \sum_{i=1}^m \mathbf{x}_i$.
Perform a compact SVD: $\mathbf{X}_0 = \mathbf{V}_r \diag(\boldsymbol{\sigma}_r) \mathbf{W}_r^\intercal$. 
Select $k$ as the smallest integer $j$ satisfying
$\sum_{i=1}^{j} \sigma_i^2 \ge \tau \sum_{i=1}^{r} \sigma_i^2$,
where $\tau \in (0,1)$ is a prescribed variance threshold.  
Set the deterministic ROB to the first $k$ singular vectors $\mathbf{V}_k$.
\item \textbf{Solve ROM:} Substitute $\mathbf{V}_k$ into \cref{eq:ROM} to obtain the deterministic ROM solution.
\item \textbf{Tune stochastic subspace model:} Using the SS-PPCA \cref{alg:ss-ppca},
generate stochastic subspaces $\mathscr{W} \sim \text{MACG}_{n,k,\beta}(\mathbf{S})$
represented by bases $\mathbf{W} = \mathbf{V}_r \mathbf{U}_k$,
and tune the concentration parameter $\beta \in [k, \infty)$ according to the criterion in \cref{eq:objective}.
\item \textbf{Build SROM ensemble:} With the tuned $\beta$, sample $\mathscr{W}$ to produce stochastic bases $\mathbf{W}$ to replace $\mathbf{V}_k$ in \cref{eq:ROM}. The resulting ensemble of SROM realizations quantifies the model and truncation error through the variability of their outputs.
\end{enumerate}

These steps are applicable to both linear and nonlinear systems. 
For linear systems, however, the cost of SROM construction can be reduced substantially.
Naively, each SROM realization requires constructing a stochastic basis
$\mathbf{W} = \mathbf{V}_r \mathbf{U}_k$ and projecting the HDM onto this basis.
This cost accumulates with the number of SROM samples. 
A more efficient strategy is to perform a two-stage reduction:
first project the operators once onto the rank-$r$ subspace spanned by $\mathbf{V}_r$
(e.g., $\mathbf{A}_r = \mathbf{V}_r^\intercal \mathbf{A} \mathbf{V}_r$ for
any system matrix $\mathbf{A}$);
and then for each stochastic basis $\mathbf{U}_k$ sampled in this $r$-dimensional subspace,
compute $\mathbf{A}_{\mathbf{W}} = \mathbf{U}_k^\intercal \mathbf{A}_r \mathbf{U}_k$
so that the overall basis is $\mathbf{W} = \mathbf{V}_r \mathbf{U}_k$. 
This two-stage reduction avoids repeated high-dimensional operator projections
and significantly lowers the SROM construction cost.

\section{Hyperparameter training} \label{sec:training}

To find the optimal hyperparameter $\beta \in [k, \infty)$, we minimize the following objective function:
\begin{equation} \label{eq:objective}
  f(\beta) := \mathbb{E}[|d_o(\mathbf{u}_L) - d_o(\mathbf{u}_E)|^2 \, | \, \beta],
\end{equation}
where $\mathbf{u}_E$ is the experimental or ground-truth observation of the output,
$\mathbf{u}_L$ is the low-fidelity prediction of the SROM,
and $d_o(\mathbf{u}) := \|\mathbf{u} - \mathbf{u}_L^o\|_{L^2}$ is the $L^2$ distance
to the low-fidelity prediction $\mathbf{u}_L^o$ of a reference model.
Given the value of the hyperparameter $\beta$, the stochastic subspace model determines the stochastic ROM,
which produces stochastic predictions $\mathbf{u}_L$ that is then summarized into the random variable $d_o(\mathbf{u}_L)$.
Given the experimental or ground-truth observation $\mathbf{u}_E$,
the objective function measures the mean squared error between $d_o(\mathbf{u}_L)$ and $d_o(\mathbf{u}_E)$,
which is a statistical measure of how closely the SROM resembles the ground truth.
Overall, this optimization problem aims to improve the consistency of the SROM
in characterizing the error of the reference model.

We optimize the objective function $f(\beta)$ using a one-dimensional optimization algorithm,
such as golden section search and successive parabolic interpolation.
Implementations of such algorithms are readily available in many programming languages,
e.g., \texttt{scipy.optimize.minimize_scalar()} in Python,
\texttt{fminbnd()} in Matlab, and \texttt{optimize()} in R.
The objective function, however, is not accessible due to the expectation over SROM.
To reduce computation, we approximate $f(\beta)$ at integer $\beta$ values by Monte Carlo sampling,
and linearly interpolate $f(\beta)$ between consecutive integer points.
Unless specified otherwise, 1,000 random samples of the SROM are used for each $\beta$ value.
In practice, when the optimization algorithm queries $f(\beta)$ at a real-valued $\beta$,
its values at the two closest integer points will be computed using Monte Carlo sampling and stored.
If future queries need the value of $f(\beta)$ at a previously evaluated integer point,
the stored value will be used to avoid re-evaluation.

The optimization scheme employed in this study is very easy to implement.
A possible limitation is its computational efficiency
due to the Monte Carlo approximation of the expectation operator.
Although SROMs are computationally inexpensive to simulate compared with the HDM,
evaluating thousands of SROM samples can still be moderately costly.
Advanced optimization algorithms may be designed to address this issue,
but it is out of scope for the current work.

\section{Related works} \label{sec:related} %

This section reviews in detail the two main works that use SROM for model error characterization.
Firstly, the work by \citet{Soize2017srob} is discussed in \cref{sec:NPM}.
The work by \citet{ZhangH2023} for characterizing model-form uncertainty across multiple models is presented in \cref{sec:RSM}.
In \cref{sec:compare-model-structures}, we distinguish the three methods
based on their operational mechanisms and their dependence on available information.

\subsection{Non-parametric model (NPM)} \label{sec:NPM}

The seminal work of NPM \citep{Soize2017srob} introduced the idea of using SROM to analyze model-form uncertainty.
The stochastic basis in the NPM can be written as:
\begin{equation}
  \mathbf{W} = \pi\left(\overline{\mathbf{V}} +
    \mathcal{P}^{\text{St}}_{\overline{\mathbf{V}}}(\mathbf{P}_{\mathscr{B}^\perp} \mathbf{U})\right),
  \quad \mathbf{U} = s \mathbf{G} \mathbf{R},
  \quad \mathbf{G} \sim \mathcal{GP}(\gamma).
\end{equation}
The random matrix $\mathbf{G}$ requires knowledge of the discretization $\mathscr{D}$ of the spatial fields,
and is sampled from a Gaussian process with a hyperparameter $\gamma$ for correlation length-scale.
The column dependence matrix $\mathbf{R}$
is an order-$k$ upper triangular matrix with positive diagonal entries.
The scale $s \in [0, \infty)$ is a non-negative number.
Once $\mathbf{U}$ is constructed,
orthogonal projection $\mathbf{P}_{\mathscr{B}^\perp}$ enforces the linear constraints,
tangential projection $\mathcal{P}^{\text{St}}_{\overline{\mathbf{V}}}$
projects a matrix to the tangent space of the Stiefel manifold
at a reference ROB $\overline{\mathbf{V}}$,
and orthonormalization $\pi$ as in \cref{eq:MACG-St} gives an orthonormal basis.
The hyperparameters $(s, \gamma, \mathbf{R})$ have a dimension of $k(k+1)/2 + 2$,
and are trained by minimizing a weighted average of $J_{\text{mean}}$ and $J_{\text{std}}$:
the former aims at matching the SROM mean prediction to ground-truth observations $\mathbf{O}$;
the latter aims at matching the SROM standard deviation to a scaled difference
between $\mathbf{O}$ and $\mathbf{u}_L^o$.

\subsection{Riemannian stochastic model (RSM)} \label{sec:RSM}

For problems where there exist a number of physically plausible candidate models,
the RSM \citep{ZhangH2023} characterizes model-form uncertainty
using SROMs constructed with the following stochastic basis:
\begin{equation}
  \mathbf{W} = \exp^{\text{St}}_{\overline{\mathbf{V}}}%
  \Big\{c \sum_{i=1}^q p_i
  \log^{\text{St}}_{\overline{\mathbf{V}}}(\mathbf{V}^{(i)})%
  \Big\},
  \quad \mathbf{p} \sim \text{Dirichlet}(\boldsymbol{\alpha}),
\end{equation}
where $c \in [0, \infty)$ is a scale parameter,
$q$ is the number of models,
and $\mathbf{p} = (p_i)_{i=1}^q$ is a probability vector
following the Dirichlet distribution
with concentration parameters $\boldsymbol{\alpha} \in \mathbb{R}_{>0}^q$.
The Riemannian logarithm $\log^{\text{St}}_{\overline{\mathbf{V}}}$ maps a model-specific ROB $\mathbf{V}^{(i)}$
to the tangent space of the Stiefel manifold at a reference ROB $\overline{\mathbf{V}}$.
The Riemannian exponential $\exp^{\text{St}}_{\overline{\mathbf{V}}}$ maps a tangent vector
at $\overline{\mathbf{V}}$ back to the Stiefel manifold,
giving the orthonormal basis $\mathbf{W}$.
The scale parameter $c$ is usually set to one or omitted in later works \cite{Zhang2024,Quek2025}.
The hyperparameters $\boldsymbol{\alpha}$ are trained by minimizing
$\|\mathbb{E}[\log^{\text{St}}_{\overline{\mathbf{V}}} \mathbf{W}]\|_F^2$,
to match the center of mass of the distribution to $\overline{\mathbf{V}}$.
It is a quadratic program and can be solved efficiently.
While the RSM is designed for multi-model settings,
finding a reasonable way to apply it to a single-model setting
is an open problem that is to be investigated by the original authors.

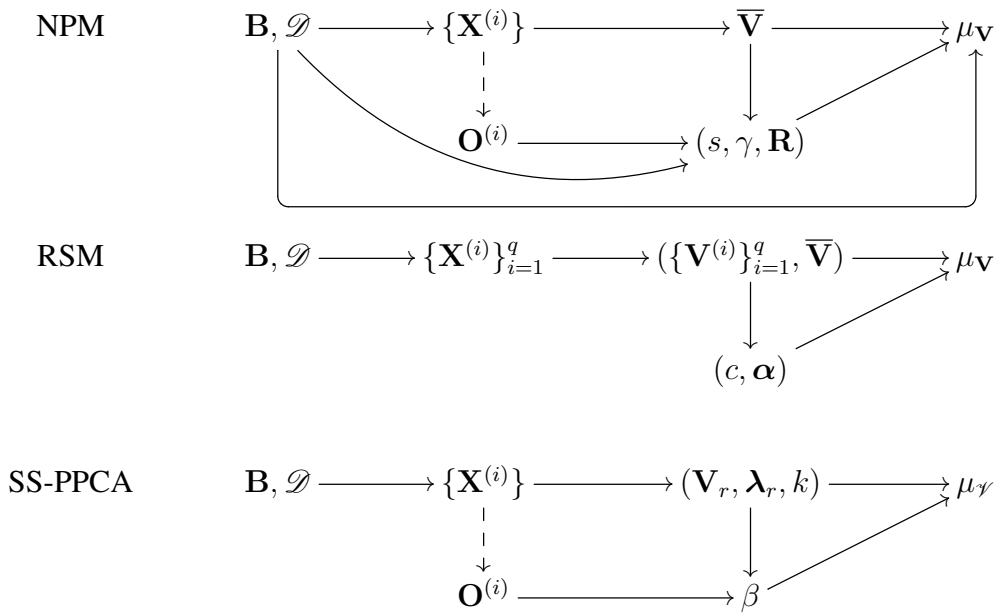
\begin{figure}[t]
  \centering
  \[
    \xymatrix@C+1pc{
      \text{NPM} & \mathbf{B}, \mathscr{D}
      \ar@/^-1.15cm/[drr]       %
      \ar[r]
      \ar `d/4pt[dr]-<0em,2em> `/4pt[rrr] [rrr] %
      & \{\mathbf{X}^{(i)}\} \ar[r] \ar@{-->}[d] & \overline{\mathbf{V}} \ar[r] \ar[d] & \mu_\mathbf{V} \\
      &  & {\mathbf{O}^{(i)}} \ar[r] & (s, \gamma, \mathbf{R}) \ar[ur] & \\
      \text{RSM} &  \mathbf{B}, \mathscr{D} \ar[r] & \{\mathbf{X}^{(i)}\}^q_{i = 1} \ar[r] & (\{\mathbf{V}^{(i)}\}^q_{i=1} , \overline{\mathbf{V}}) \ar[r] \ar[d] & \mu_\mathbf{V} \\
      & & & (c, \boldsymbol{\alpha}) \ar[ur] & \\
      \text{SS-PPCA} & \mathbf{B}, \mathscr{D} \ar[r] & \{\mathbf{X}^{(i)}\} \ar[r] \ar@{-->}[d] & (\mathbf{V}_r, \boldsymbol{\lambda}_r, k) \ar[r] \ar[d] & \mu_{\mathscr{V}} \\
      & & {\mathbf{O}^{(i)}} \ar[r] & \beta \ar[ur] & \\
    }
  \]
  \caption{Dependence diagrams for (from top to bottom)
    the NPM and the RSM models of stochastic basis
    and the SS-PPCA model of stochastic subspace.
    The dashed lines indicate that the connection only exists
    when characterizing the ROM-to-HDM error.}
  \label{fig:comparision}
\end{figure}

\subsection{Comparison of model structures} \label{sec:compare-model-structures}

To deepen understanding of the three methods, we outline their structures in \Cref{fig:comparision}
and highlight a few key differences below.
Both NPM and RSM construct probabilistic models of reduced-order bases, whereas SS-PPCA models subspaces directly.
Another distinction lies in the reliance on discretization information:
NPM requires discretization and boundary condition data for hyperparameter optimization and SROM sampling,
while SS-PPCA and RSM do not.
Finally, in default settings, RSM optimizes its hyperparameters using only the reduced bases,
while SS-PPCA and NPM incorporate observed quantities of interest into their objective functions.
Other objective functions may be developed to incorporate observed quantities for the calibration of RSM hyperparameters.

\section{Numerical experiments} \label{sec:examples}

In this section, we evaluate the proposed method, SS-PPCA, in characterizing model error using three numerical examples.
In \cref{sec:param-static}, we examine a parametric nonlinear static problem and characterize model error at unseen parameters.
In \cref{sec:static}, we demonstrate how our approach can be used to
characterize the HDM error, assuming that experimental data is available.
Finally, in \cref{sec:dynamic}, we discuss a linear dynamics problem involving a space structure.
All data and code for these examples are available at \href{https://github.com/UQUH/SS_PPCA}{https://github.com/UQUH/SS_PPCA}.

The performance of the SROMs is assessed using two general metrics for stochastic models: consistency and sharpness.
By \textit{consistent} UQ,
we mean that the statistics on prediction uncertainty---such as standard deviation
and predictive intervals (PI)---derived from the model match the prediction error on average.
To measure consistency, we calculate the coverage, defined as the percentage of experimental observations captured by the PIs. For the predictions to be consistent, the coverage percentage should approximately match the nominal confidence level used to construct the PIs.
Without consistency, the model either under- or over-estimates its prediction error,
which risks being over-confident or unnecessarily conservative. %
By \textit{sharp} UQ, we mean the model's PIs are narrower than alternative methods, which can be assessed quantitatively by computing the PI widths or qualitatively through visual inspection.
This is essentially the probabilistic version of model accuracy
and is desirable as we want to minimize error.

\subsection{Parametric nonlinear static problem} \label{sec:param-static}
We consider a one-dimensional parametric nonlinear static problem
with $n = \num{1000}$ degrees of freedom (DoFs), governed by the system of equations:
\begin{equation}
  \mathbf{K} \mathbf{x}(\boldsymbol{\mu}) + \alpha \mathbf{x}^3(\boldsymbol{\mu}) = \mathbf{f}(\boldsymbol{\mu}),
\end{equation}
where $\mathbf{x}(\boldsymbol{\mu}) \in \mathbb{R}^n$ is the displacement vector,
and $\boldsymbol{\mu} \in [0,1]^5$ is a parameter vector which controls the loading conditions.
The parameter $\alpha > 0$ controls the strength of the cubic nonlinearity. In this example, we set $\alpha = 10^4$.
The system is subject to homogeneous Dirichlet boundary conditions
at the first and the last node, i.e., $x_1 (\boldsymbol{\mu}) = x_n (\boldsymbol{\mu})= 0$.
These constraints can be compactly represented as
$\mathbf{B}^\intercal \mathbf{x}(\boldsymbol{\mu}) = 0$
with $\mathbf{B} = [\mathbf{e}_1 \, \mathbf{e}_n]$,
where $\mathbf{e}_i$ is the $i$-th standard basis vector of $\mathbb{R}^n$.
The stiffness matrix $\mathbf{K} \in \mathbb{R}^{n \times n}$ is constructed as
$\mathbf{K}$ = $\mathbf{\Phi} \mathbf{\Lambda} \mathbf{\Phi}^{\intercal}$, with
$\mathbf{\Lambda} = \diag(4 \pi^2 j^2)_{j=1}^{n-2}$ and
$\mathbf{\Phi} = [0 \; \mathbf{S} \; 0]^\intercal$.
The matrix $\mathbf{S}$ = $\sqrt{\frac{2}{n-1}} \left[\sin \left(\frac{jk \pi}{n-1}\right)\right]_{k=1, \dots, n-2}^{j=1, \dots, n-2}$
is the order-$(n-2)$ type-I discrete sine transform (DST-I) matrix, scaled to be orthogonal.
We can see that $\boldsymbol{\Lambda} \in \mathbb{R}^{(n-2) \times (n-2)}$ and $\boldsymbol{\Phi} \in \mathbb{R}^{n \times (n-2)}$.
Therefore $\mathbf{K}$ has eigenpairs $(\lambda_j,\boldsymbol{\phi}_j)$
with $\lambda_j = 4\pi^2j^2$ and $\boldsymbol{\phi}_j$ the $j$-th column of $\boldsymbol{\Phi}$.
The force is given by
$\mathbf{f}(\boldsymbol{\mu}) = \|\textbf{g}(\boldsymbol{\mu})\|_{\infty}^{-1} \textbf{g}(\boldsymbol{\mu})$
with $\textbf{g}(\boldsymbol{\mu}) = \sum_{i=2}^{6} \mu_i \boldsymbol{\phi}_i$ and $\mu_i \in [0,1].$

\begin{figure*}[!t]
  \centering
    \centering
    \includegraphics[width=\textwidth]{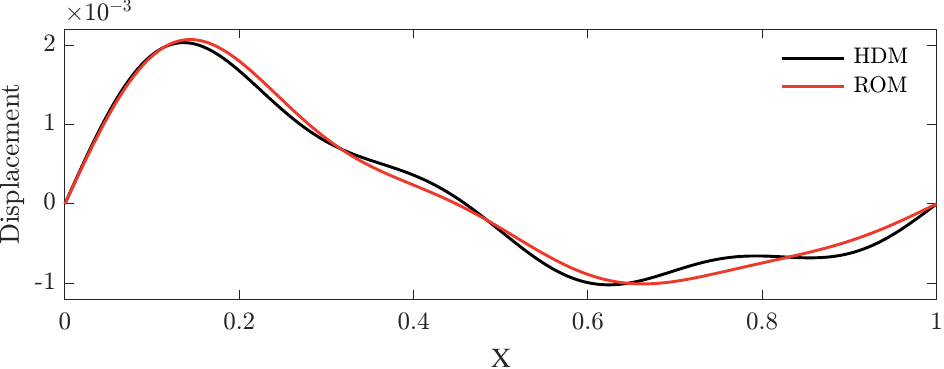}
    \caption{HDM vs ROM displacement at the test parameter}
    \label{fig:nonlinear-static-HDM-ROM}
\end{figure*}

The nonlinear HDM can be computationally expensive,
especially if it needs to be evaluated multiple times.
This justifies the use of ROM.
We construct a ROM by evaluating the HDM at 100 parameter samples,
$\boldsymbol{\mu}_j \in [0,1]^5$, drawn via Latin Hypercube Sampling (LHS).
For each sample, the solution $\mathbf{x}_j = \mathbf{x}(\boldsymbol{\mu}_j)$, $j = 1, \cdots, 100$ is computed by solving the nonlinear system using the Newton-Raphson method.
These solutions are concatenated to form a snapshot matrix $\mathbf{X}$,
which is used to construct a deterministic ROB $\mathbf{V} = \pi_k(\mathbf{X})$
with dimension $k=4$ via POD.
The governing equation of the ROM is given by:
\begin{equation}
    \mathbf{x}_{R}(\boldsymbol{\mu}) = \mathbf{V}\mathbf{q}(\boldsymbol{\mu}),
    \quad \mathbf{V}^{\intercal} \mathbf{K} \mathbf{V} \, \mathbf{q}(\boldsymbol{\mu}) + \alpha \, \mathbf{V}^\top \left( \left( \mathbf{V} \, \mathbf{q}(\boldsymbol{\mu}) \right)^{3} \right) = \mathbf{V}^\top \mathbf{f}(\boldsymbol{\mu})
\end{equation}

To evaluate the accuracy of this ROM, we define a test parameter as
$\boldsymbol{\mu}_{\text{test}} = [\frac{1}{2},\frac{1}{2},\frac{1}{2},\frac{1}{2},1]$,
which lies within the parameter domain but was not included in the training set.
The goal is to characterize the ROM error at this previously unseen parameter point,
providing insight into the model’s generalization capabilities.
\Cref{fig:nonlinear-static-HDM-ROM} shows the error between HDM and ROM displacement at the test parameter.

To characterize the error induced by ROM,
we construct SROMs by replacing the deterministic ROB $\mathbf{V}$ with SROBs $\mathbf{W}$.
The SROBs $\mathbf{W}$ are sampled using the stochastic subspace model
described in \cref{sec:stochastic-subspace-ppca} and \cref{alg:ss-ppca}.
To achieve consistent UQ, the hyperparameter $\beta$ is optimized
by minimizing the objective function given in \cref{eq:objective}.
For this example, $\mathbf{u}_E$ represents the HDM displacement at training parameters,
and $\mathbf{u}_L^o$ represents the ROM displacement at the same parameters.
The optimal value of the hyperparameter $\beta$ is 21, obtained using the strategy described in \cref{sec:training}.
The hyperparameter training time for this example is 1,480 seconds.
The long training time is primarily due to the nonlinear nature of the problem. Although the ROM is generally significantly faster than the HDM, the cubic nonlinearity requires projecting the ROM solution back to the full-order space, which is computationally intensive.
Employing efficient strategies for nonlinear model reduction \cite{Chaturantabut2010,Carlberg2010,Farhat2015} could substantially reduce the computational cost of hyperparameter training.

\begin{figure}[!t]
  \centering
  \includegraphics[width=\linewidth]{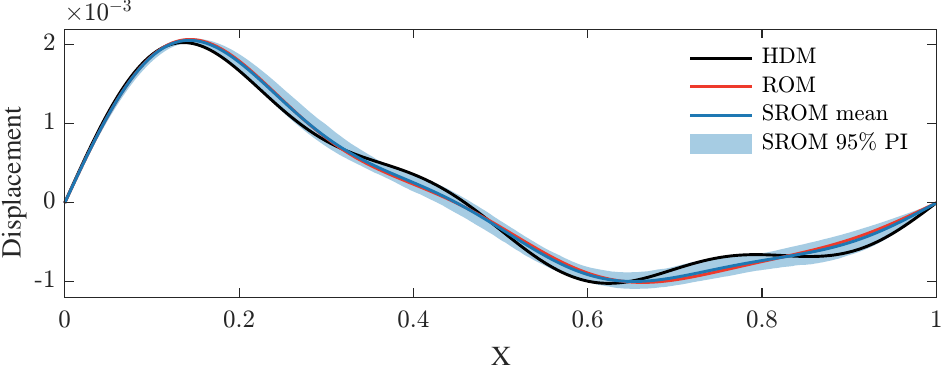}
  \caption{Nonlinear static problem: prediction by SS-PPCA.}
  \label{fig:nonlinear-PPCA-prediction}
\end{figure}

\Cref{fig:nonlinear-PPCA-prediction}
shows the ROM error characterization for the nonlinear parametric system at the test parameters.
The 95\% PI is estimated using 1,000 Monte Carlo samples of the SROM, both here and in the later examples.
It can be observed that the SROM mean displacement is very close to the ROM displacement, 
and the SROM 95\% PI is able to capture the ROM error consistently and sharply. 
The coverage for this example is 96.6\%, that is, the SROM 95\% PI has captured the ROM error consistently. 
It can also be observed that this coverage is achieved while maintaining tight bounds around the HDM displacement.

\subsection{Static problem: characterizing HDM error} \label{sec:static}

In the previous example,
we demonstrated the use of SROM to characterize the model truncation error induced by ROM in a nonlinear parametric system.
However, the utility of SROM is not limited to characterizing ROM error.
It can also be employed to characterize discrepancies between
HDM and experimental data, a crucial task in model validation.
Here, we characterize HDM error via the SROM approach in a linear static problem.

The synthetic experimental data is generated by solving the system:
\begin{equation}
  \mathbf{K}_{\text{E}} \mathbf{x}_{\text{E}} = \mathbf{f},
\end{equation}
with $n$ (model dimension) $= \num{1000}$,
$\mathbf{f} = \|\textbf{g}\|_{\infty}^{-1} \textbf{g}$
where $\textbf{g} = \sum_{i = 2}^{5} 0.5 \boldsymbol{\phi}^i + 1\boldsymbol{\phi}^6$
and homogeneous Dirichlet boundary conditions $\mathbf{B}^\intercal \mathbf{x}_{\text{E}} = 0$
where $\mathbf{B} = [\mathbf{e}_1 \, \mathbf{e}_n]$.
The stiffness matrix is defined as $\mathbf{K}_{\text{E}} = \mathbf{K} + \mathbf{K}_{\epsilon}$,
where $\mathbf{K}$ is same as the previous example
and $\mathbf{K}_{\epsilon}$ is
designed to induce a 15\% change in the Frobenius norm of $\mathbf{K}$.
In particular, 
$\mathbf{K}_{\epsilon}= \boldsymbol{\Phi}\boldsymbol{\Lambda}_{\epsilon} \boldsymbol{\Phi}^\intercal$ with $\boldsymbol{\Lambda}_{\epsilon} = \diag (c z_j \lambda_j)_{j=1}^{n-2}$, $z_j \overset{iid}{\sim} N(0,1)$, and $c$ is chosen such that $\| \boldsymbol{\Lambda}_{\epsilon}\|_{\text{F}} = 0.15  \|\boldsymbol{\Lambda}\|_{\text{F}}$. 
In addition to the model error, we simulate measurement error by adding 5\% noise
to the experimental displacement $\mathbf{x}_{\text{E}}$.
Hence, the observed experimental data follow:
$\mathbf{x}_{\text{obs}} = \mathbf{x}_{\text{E}} + \eta$, where $\eta \sim N(0,\sigma^2 \mathbf{I})$
with variance chosen to achieve the specified noise level.
Experimental data can be very expensive to acquire.
Therefore, we limit ourselves to experimental data from a sparse subset of spatial locations.
In this example, we pick 19 equidistant points within 0 and 1, excluding the boundaries.
These sparse measurements are used to characterize HDM error via the SROM approach.

\begin{figure}[!t]
  \centering
  \includegraphics[width=\textwidth]{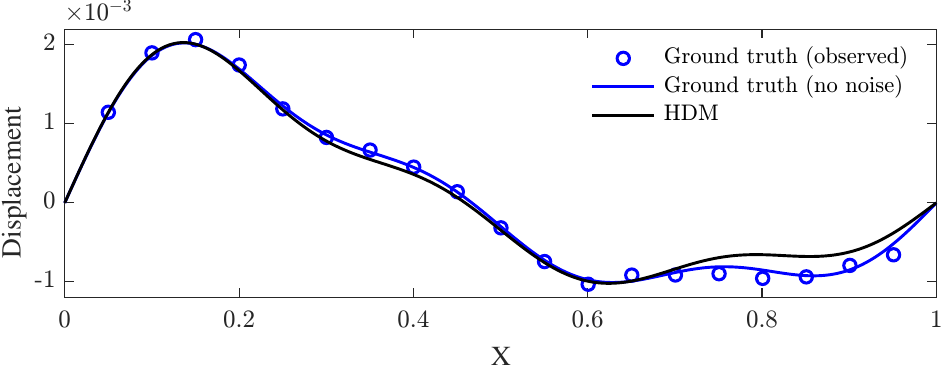}
  \caption{Ground truth vs HDM displacement.}
  \label{fig:ex3-HDM-EXP}
\end{figure}

The computational HDM is defined as:
\begin{equation}
  \mathbf{K} \mathbf{x}_{\text{H}} = \mathbf{f},
\end{equation}
with the same $\mathbf{K}$ and $\mathbf{f}$ as above.
However, the HDM produces biased predictions because the true system differs from this model, see \Cref{fig:ex3-HDM-EXP}.

To capture this error using SROM,
we collect snapshots with perturbed HDMs
so that the perturbed HDM responses can sufficiently cover the experimental data.
The stiffness matrix of the HDM is perturbed 100 times in a way similar to the construction of
$\mathbf{K}_\text{E}$, and the responses are collected in a snapshot matrix.
ROM of dimension $k = 4$ is then constructed using POD, which can be written as:
\begin{equation}
  \mathbf{x}_{\text{R}} = \mathbf{V}\mathbf{q},
  \quad \mathbf{V}^{\intercal} \mathbf{K} \mathbf{V} \mathbf{q} = \mathbf{V}^{\intercal}\mathbf{f}.
\end{equation}
We construct SROMs by substituting the deterministic ROB $\mathbf{V}$ with SROB $\mathbf{W}$.
The SROB $\mathbf{W}$ is sampled
using the stochastic subspace model described in \cref{sec:stochastic-subspace-ppca} and \cref{alg:ss-ppca}.
To ensure consistent UQ, the hyperparameter $\beta$ is optimized
by minimizing the objective function given in \cref{eq:objective}.
For this example, $\mathbf{u}_E$ represents the sparsely observable experimental displacement,
and $\mathbf{u}_L^o$ denotes the ROM displacement at the same locations.
Using the strategy given in \cref{sec:training},
the optimum value of the hyperparameter $\beta$ comes out to be 5.
However, since the optimal value of the hyperparameter is a small integer,
we may extend $\beta$ to real values to achieve better consistency.
In fact, the coverage of the 95\% PI using $\beta = 5$ is 82.2\%, notably short of the nominal 95\% value.
The main reason behind this low consistency is that the strategy discussed in \cref{sec:training}
evaluates $f(\beta)$ at integer $\beta$ and uses linear interpolation
to approximate $f(\beta)$ between two consecutive integer $\beta$,
which may not give the best results if the optimal hyperparameter is close to one.

To improve SROM consistency, the hyperparameter training for this example
follows a two-step procedure %
to accommodate a real-valued hyperparameter.
Step one follows the method described in \cref{sec:training}, using 1,000 Monte Carlo samples.
This initial search yields an optimal hyperparameter value of $\beta = 5$,
with a training time of 2.2 seconds.
Step two generalizes $\beta$ to the real numbers, as outlined in \cref{sec:stochastic-subspace-ppca},
and applies a one-dimensional optimization algorithm.
The search is constrained to the interval $[4, 6]$,
chosen to enclose the optimal integer value found in step one.
To reduce the approximation error in the objective function,
the number of Monte Carlo samples is increased to $10^5$.
A convergence tolerance of $10^{-10}$ is imposed to ensure numerical accuracy.
This step yields an optimal value of $\beta = 4.32$ with a training time of 577.5 seconds.
This two-step approach maintains low overall computational cost
while significantly improving predictive consistency.
Generalizing $\beta$ to real values increases the coverage of the 95\% prediction interval
from 82.2\% to 94.2\%.
It should be noted that while the hyperparameter optimization is based on the noisy observed values,
we achieve consistent coverage of the noiseless ground truth state.
In comparison, the coverage of the noisy data points is 73.68\%,
which is much lower than the coverage of the noiseless ground truth state.

\begin{figure}[!t]
  \centering
  \includegraphics[width=\textwidth]{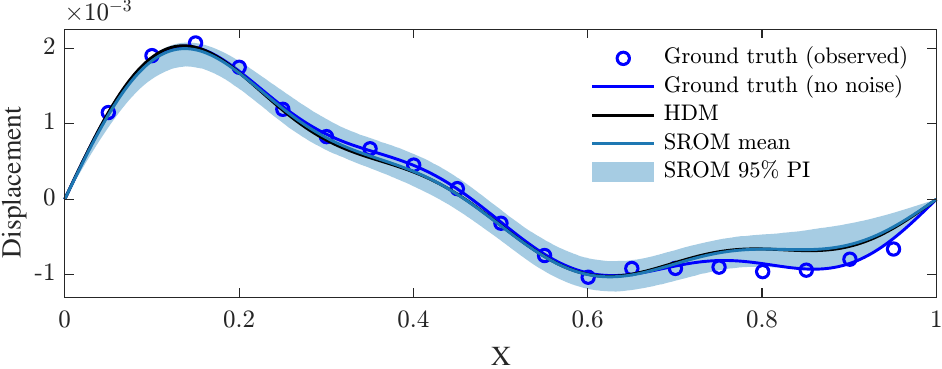}
  \caption{Linear static problem with experimental data: SS-PPCA prediction.}
  \label{fig:ex3-PPCA-prediction}
\end{figure}

\Cref{fig:ex3-PPCA-prediction} shows the HDM error characterization
by the SS-PPCA method using the sparse experimental observations.
It can be observed that our method successfully captures the model discrepancy,
yielding accurate uncertainty quantification for the HDM predictions.
The SROM 95\% PI captures the noiseless ground-truth data well, and the bounds are very sharp.
Although we do not explicitly separate measurement noise from model error during hyperparameter training, our results show that SROM coverage is higher and more consistent for noiseless data than for data with measurement noise. This suggests that SS-PPCA implicitly distinguishes between the two. While we do not provide a formal theoretical justification, this observation indicates promising behavior that merits further investigation.
The trained SROM model via SS-PPCA enables fast prediction with error estimates and eliminates the dependence on computationally expensive HDM. 

\subsection{Dynamics problem: space structure}\label{sec:dynamic}

In \cref{sec:param-static,sec:static}, we demonstrated that the SS-PPCA method
performs well in characterizing uncertainty with low computational cost.
However, those cases involved relatively simple static problems.
To evaluate the robustness and scalability of the method,
we compare it in a much more complex linear dynamics problem of a space structure.
Because of the higher model complexity and more realistic structural dynamics,
this example may better assess the efficiency and effectiveness of the method.

\begin{figure}[!t]
  \centering
  \includegraphics[width=0.9\linewidth]{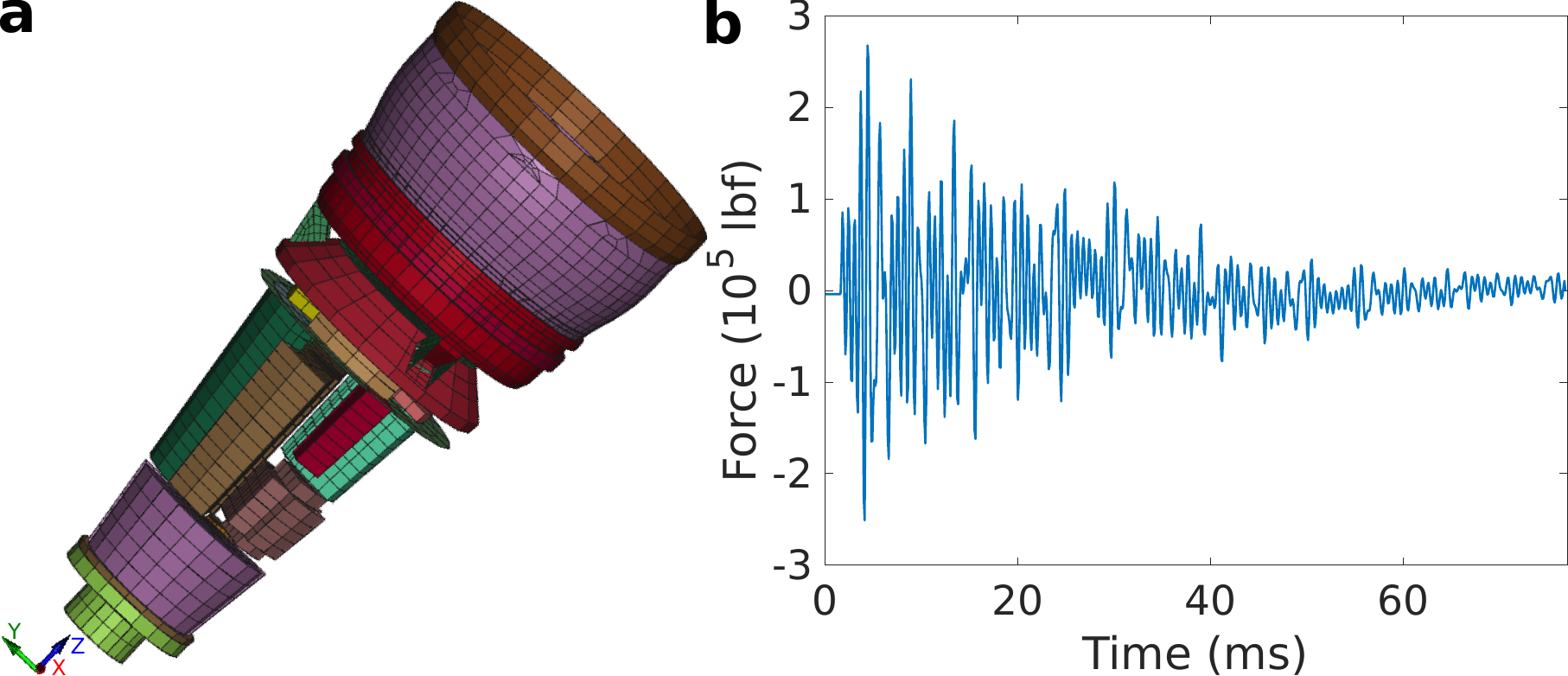}
  \caption{Dynamics problem: (a) space structure; (b) loading.}
  \label{fig:dynamics-problem}
\end{figure}

We consider a major component of a space structure \cite{ZengX2023}
given an impulse load in the $z$-direction,
see \Cref{fig:dynamics-problem}.
The space structure consists of two major parts:
an open upper and a lower part.
A large mass (approximately 100 times heavier than the remainder of the model)
is placed at the center of the upper part.
This mass is connected to the sidewalls via rigid links.
The lower part of the structure consists of the outer cylindrical shell
and an inner shock absorption block with a hollow space to hold the essential components.
The impulse load, as shown in \Cref{fig:dynamics-problem}b,
is applied at the center of the mass in the $z$-direction,
which is transmitted to the upper part through the sidewalls
and then to the lower part via the mounting pedestal.
The impulse load can compromise the functionality of the essential components located in the lower part.
Hence, monitoring the acceleration and velocity at the critical points of the essential parts
is important for their safe operation.
We take the quantity of interest (QoI)
to be the $x$-velocity of a critical point at one of the essential components.
The space structure has no boundary conditions; it can be assumed to be floating in outer space.
The governing equation of the HDM of the space structure is given by:
\begin{equation}\label{eq:HDM-space}
  \mathbf{M}_{\text{H}} \, \ddot{\mathbf{x}} + \mathbf{C}_{\text{H}} \, \dot{\mathbf{x}} + \mathbf{K}_{\text{H}} \, \mathbf{x} =
  \mathbf{f},
\end{equation}
with the initial conditions $\mathbf{x}(0) = 0$ and $\dot{\mathbf{x}}(0) = 0$.
The system matrices $\mathbf{M}_{\text{H}}, \mathbf{K}_{\text{H}}$,
and force $\mathbf{f}$ are exported from the finite element model in LS-DYNA
and $\mathbf{C}_{\text{H}} = \beta_{\text{H}} \mathbf{K}_{\text{H}}$ with a Raleigh damping coefficient
$\beta_{\text{H}} = \num{6.366E-6}$.
We denote the solution of the HDM as $\mathbf{x}_{\text{H}}$,
which is obtained by numerically integrating \cref{eq:HDM-space}
using the Newmark-$\beta$ method with a time step of $5\times 10^{-2}$ ms.
A single HDM simulation takes approximately 38 minutes, making it computationally expensive,
especially when multiple runs are needed.
To address this challenge, we construct a ROM with dimension $k = 10$ via POD.
The ROM derived from the HDM is defined as
\begin{equation}
  \mathbf{x}_{\text{R}} = \mathbf{V}\mathbf{q},
  \quad \mathbf{M}_{\text{H},\mathbf{V}} \, \ddot{\mathbf{q}} + \mathbf{C}_{\text{H},\mathbf{V}} \, \dot{\mathbf{q}} + \mathbf{K}_{\text{H},\mathbf{V}} \, \mathbf{q} = \mathbf{V}^{\intercal} \mathbf{f},
\end{equation}
where reduced matrices are denoted with a pattern $\mathbf{A}_{\mathbf{V}} := \mathbf{V}^{\intercal} \mathbf{A} \mathbf{V}$,
so that $\mathbf{M}_{\text{H},\mathbf{V}} = \mathbf{V}^{\intercal} \mathbf{M}_{\text{H}} \mathbf{V}$, for example.
The initial condition of ROM is given by $\mathbf{q}(0) = 0$ and $\dot{\mathbf{q}}(0) = 0$.
In contrast to the high simulation time of HDM,
ROM takes 0.2 seconds---roughly 11,200 times faster.
However, this computational efficiency comes at the expense of reduced accuracy.
Hence, error characterization of ROM is essential,
which is done by SROM throughout this study.
To construct the SROMs,
we substitute the deterministic ROB $\mathbf{V}$
by the SROBs $\mathbf{W}$.

For the SS-PPCA method, the SROBs $\mathbf{W}$ are sampled
using the stochastic subspace model described in \cref{sec:stochastic-subspace-ppca} and \cref{alg:ss-ppca}.
For this example, $\mathbf{u}_E$ corresponds to the velocity from the HDM at a critical structural node,
and $\mathbf{u}_{L}^o$ corresponds to the velocity predicted by ROM at the same location.
The optimum value of the hyperparameter $\beta$ comes out to be 39,
which is optimized using the strategy given in \cref{sec:training}.
The hyperparameter training time for this example is 785 seconds, which is low compared to the computational cost of the HDM.
This shows that our method is scalable and can provide fast and reliable predictions even in complex systems. 

\begin{figure}[!t]
  \centering
  \includegraphics[width=\linewidth]{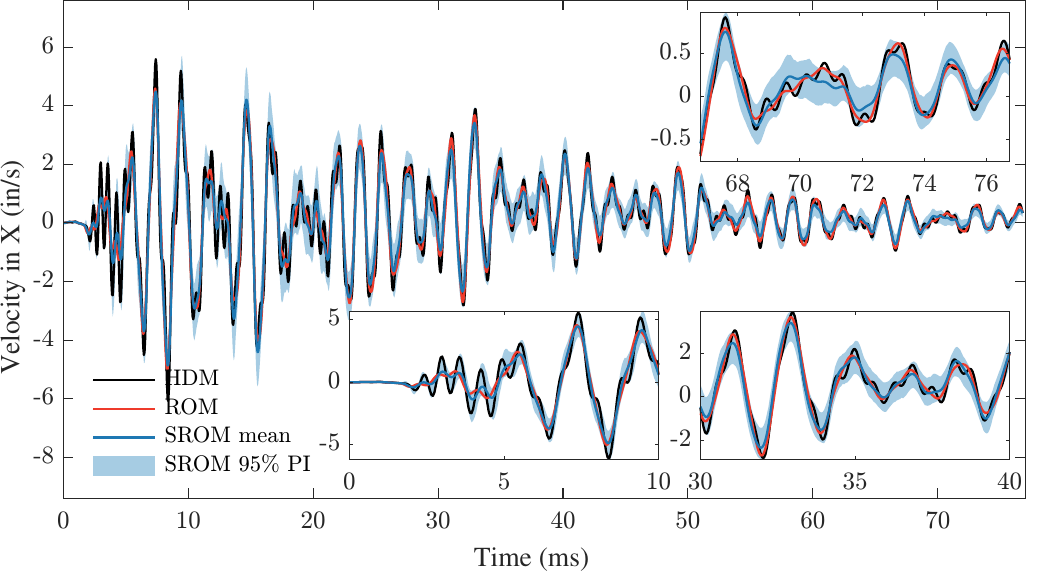}
  \caption{Dynamic prediction of the SS-PPCA model: velocity.}
  \label{fig:prediction-PPCA-vel}
\end{figure}

The HDM velocity shows a high-frequency behavior, especially during the initial 10 milliseconds.
The ROM with just 10 modes fails to capture this behavior.
\Cref{fig:prediction-PPCA-vel} illustrates that the 95\% PI of the SS-PPCA method is notably consistent and significantly sharp,
even in the initial 10 milliseconds period (see the \Cref{fig:prediction-PPCA-vel} inset for $[0, 10]$ ms). 
The SS-PPCA method not only captures the high-frequency behavior but also maintains tight bounds around it.
Additionally, the SROM mean velocity of SS-PPCA closely matches the deterministic ROM velocity, as it should be.
The coverage for velocity via the SS-PPCA method is 94.83\%.

\begin{figure}[!t]
  \centering
  \includegraphics[width=\linewidth]{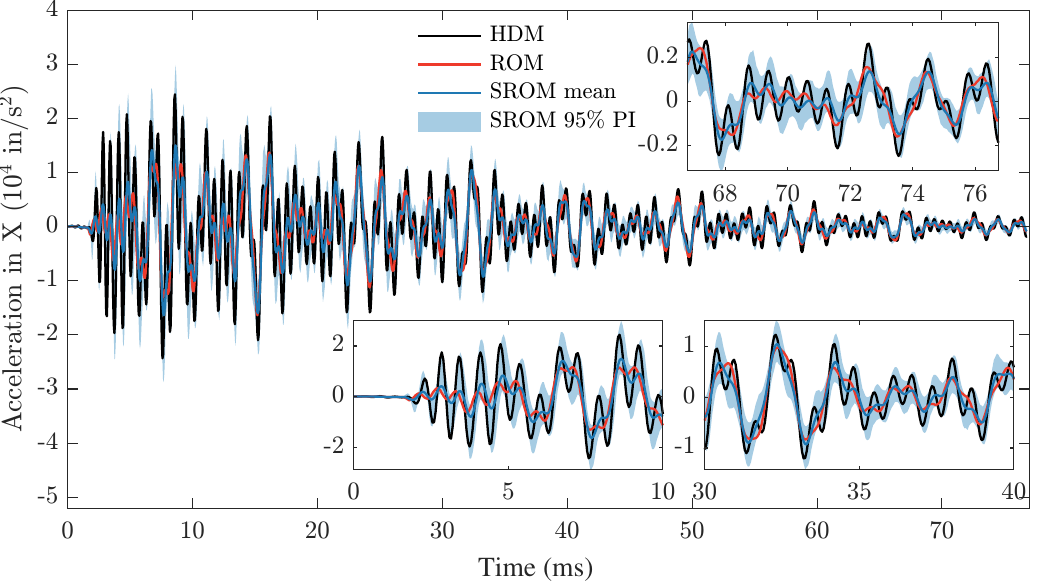}
  \caption{Dynamic prediction of the SS-PPCA model: acceleration.}
  \label{fig:prediction-PPCA-acc}
\end{figure}

\Cref{fig:prediction-PPCA-acc} shows the $x$-direction acceleration
at the critical node predicted by the SS-PPCA. %
The acceleration also exhibits a high-frequency behavior, due to the impulse loading.
Despite this, our SS-PPCA method is able to produce consistent UQ with sharp bounds. %
It is important to note that even though
velocity (the QoI for this example) is used for training the hyperparameters
and displacement is used for obtaining the deterministic ROB $\mathbf{V}$,
our approach enables predictions of unobserved quantities such as acceleration.
The fact that acceleration has not been observed makes our result very intriguing.
Furthermore, our approach enables efficient predictions of any unobserved QoIs of the whole system,
a capability that many studies in the literature fail to achieve.
This highlights the effectiveness and practicality of our approach to engineering applications.
The coverage for acceleration via the SS-PPCA method is 85.49\%, which means that the error characterization can be improved a bit. 
However, given that the acceleration is an unobserved quantity and exhibits a high-frequency behavior, the result is quite good.

\begin{figure}[!t]
  \centering
  \includegraphics[width=\linewidth]{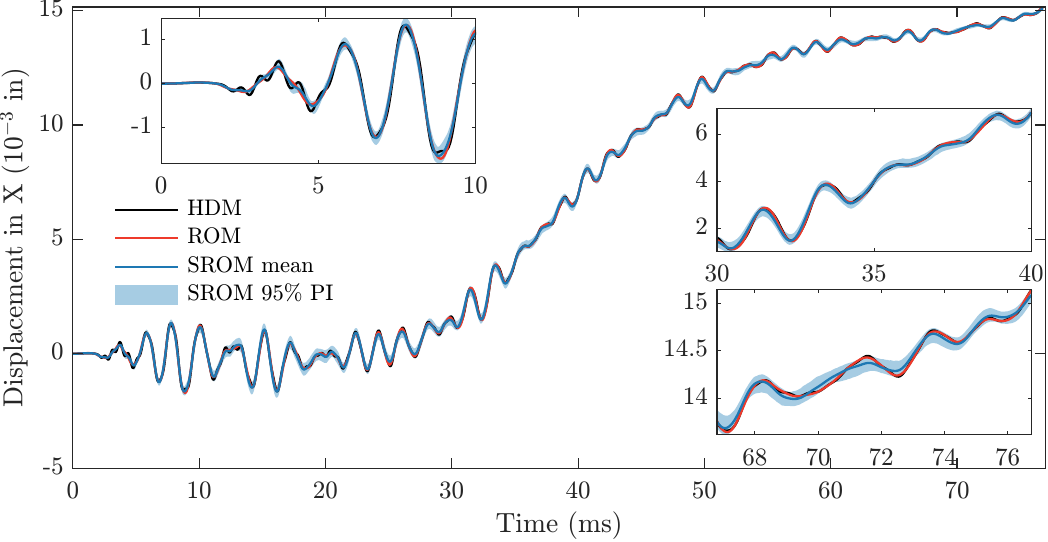}
  \caption{Dynamic prediction of the SS-PPCA model: displacement.}
  \label{fig:prediction-PPCA-disp}
\end{figure}

\Cref{fig:prediction-PPCA-disp} compares the $x$-direction displacement
at the critical node predicted by the SS-PPCA.
It can be observed that SS-PPCA displacement prediction is highly accurate:
the SROM, ROM, and HDM show similar behavior, and the 95\% PI is very tight.
The coverage for displacement via the SS-PPCA method is 96.78 \%, which means that the displacement prediction is consistent and highly accurate.
\begin{figure}[!t]
  \centering
  \includegraphics[width=\linewidth]{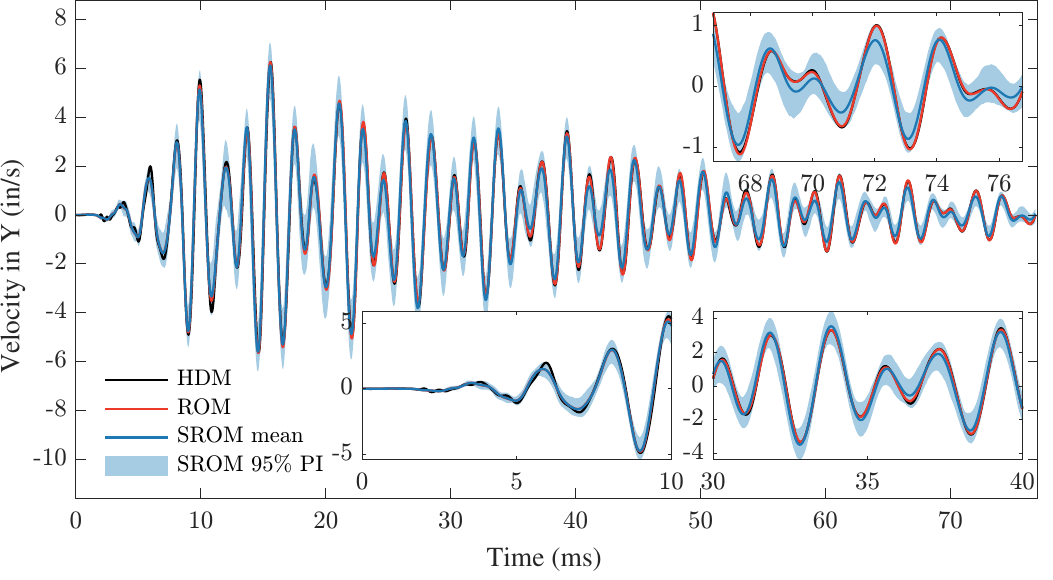}
  \caption{Dynamic prediction of the SS-PPCA model: velocity at a random DoF.}
  \label{fig:prediction-PPCA-vel-rand}
\end{figure}

All the quantities compared in \Cref{fig:prediction-PPCA-vel,fig:prediction-PPCA-acc,fig:prediction-PPCA-disp}, whether observed or unobserved, are from the same node of the space structure.
However, we may need to make a prediction at a different node of the system,
or we may be interested in the general behavior of the system.
\Cref{fig:prediction-PPCA-vel-rand} shows the prediction of the SS-PPCA method
for the $y$-direction velocity at a random node; the result is still consistent and very sharp.
This shows that the SS-PPCA methods allow us to make predictions about the entire structure
just by training on the observation data of a single DoF.

In summary, the low computational cost of hyperparameter optimization of SS-PPCA, combined with the low computational cost of ROM,
can accelerate the prediction of complex engineering phenomena.
In particular, the hyperparameter optimization of SS-PPCA is far less expensive than that of the NPM method, whose large number of hyperparameters leads to significantly higher computational cost, as discussed in \cref{sec:NPM}.
Furthermore, SS-PPCA offers a fully automated, efficient, and principled training workflow with minimal user input.
This makes it well suited for practical applications.
One issue with methods based on SROM
is that they cannot fully eliminate model error as they rely on a ROM.
This issue can be addressed by increasing the number of reduced bases or incorporating model closure terms.
The reduced-order dimension $k$ can be selected using the standard criteria mentioned in \cref{sec:ROM}. 
Increasing $k$ can reduce error due to mode truncation, but it may also increase computational cost.
Thus, the choice of $k$ requires a trade-off between accuracy and efficiency.
A study on model closure for ROMs is presented by \citet{Ahmed2021}.

\section{Conclusion} \label{sec:conclusion}

We introduced a novel stochastic subspace model,
which is used to characterize model error in the framework of stochastic reduced-order models.
It is simple, efficient, easy to implement, and well-supported by analytical results.
Our SS-PPCA model has only one hyperparameter, which is optimized systematically to improve consistency.
Through various numerical examples, we reveal the characteristics of this model,
show its flexibility across single and multiple model cases,
establish its consistency, and quantify its remarkable sharpness.
It opens up a promising path to the challenging problem of model-form uncertainty.

Across all examples, the stochastic subspace model effectively captures various forms of model error,
including parametric uncertainty, ROM error, and HDM errors informed by noisy experimental data.
The examples include both linear and nonlinear systems, demonstrating the broad applicability of the proposed method for characterizing model error across different system complexities.
However, in scenarios with large errors, mere characterization may not suffice.
In such cases, future work will focus on developing practical and efficient techniques for model error correction.

\section*{Acknowledgments}

The authors thank Johann Guilleminot and Christian Soize for invaluable discussions on this topic.
This work was supported by the University of Houston.

\clearpage
\appendix
\section{Mode of $\texorpdfstring{\text{MACG}_{n,k}(\boldsymbol{\Sigma})}{MACG_{n,k}(\Sigma)}$}
\label{apd:mode-proof}

In this section, we establish the mode of the MACG probability distribution on Grassmann manifolds.
To the best of our knowledge, this is the first appearance of such results.

The matrix angular central Gaussian distribution $\text{MACG}_{n,k}(\boldsymbol{\Sigma})$
on the Grassmann manifold $\text{Gr}(n, k)$ \cite{Chikuse2003, Chikuse1990}
is the probability distribution with the probability density function (PDF)
$p(\mathbf{X} \mathbf{X}^\intercal; \boldsymbol{\Sigma}) = z^{-1} |\mathbf{X}^\intercal \boldsymbol{\Sigma}^{-1} \mathbf{X}|^{-n/2}$
with respect to the normalized invariant measure of $\text{Gr}(n, k)$.
Here $\boldsymbol{\Sigma} > 0$ is an order-$n$ symmetric positive definite matrix,
a $k$-dim subspace $\mathscr{X} = \text{range}(\mathbf{X})$ is represented by
the orthogonal projection matrix $\mathbf{X} \mathbf{X}^\intercal$
where $\mathbf{X} \in \text{St}(n, k)$ is an orthonormal basis,
and the normalizing constant $z = |\boldsymbol{\Sigma}|^{k/2}$
where $|\cdot|$ denotes the matrix determinant.
In the following, let $\boldsymbol{\Sigma}$ have eigenvalues $\lambda_1 \ge \cdots \ge \lambda_n > 0$
and corresponding eigenvectors $\mathbf{v}_1, \cdots, \mathbf{v}_n$ that form an orthonormal basis of $\mathbb{R}^n$.

\begin{theorem}
  The PDF of the $\text{MACG}_{n,k}(\boldsymbol{\Sigma})$ distribution on $\text{Gr}(n, k)$ is maximized
  at any $k$-dim principal subspace of $\boldsymbol{\Sigma}$.
  If $\lambda_k > \lambda_{k+1}$, then the principal subspace $\mathscr{V}_k = \text{range}(\mathbf{V}_k)$
  with $\mathbf{V}_k = [\mathbf{v}_1 \cdots \mathbf{v}_k]$ is the unique mode of the distribution.
\end{theorem}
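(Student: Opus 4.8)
The plan is to reduce the statement to a classical eigenvalue-optimization problem. Since the normalizing constant $z = |\boldsymbol{\Sigma}|^{k/2}$ is fixed and the exponent $-n/2$ is negative, maximizing the PDF $p(\mathbf{X}\mathbf{X}^\intercal;\boldsymbol{\Sigma}) = z^{-1}\det(\mathbf{X}^\intercal\boldsymbol{\Sigma}^{-1}\mathbf{X})^{-n/2}$ over $\text{Gr}(n,k)$ is equivalent to \emph{minimizing} $g(\mathbf{X}) := \det(\mathbf{X}^\intercal\boldsymbol{\Sigma}^{-1}\mathbf{X})$ over $\mathbf{X}\in\text{St}(n,k)$. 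First I would verify that $g$ is invariant under $\mathbf{X}\mapsto\mathbf{X}\mathbf{Q}$ with $\mathbf{Q}\in O(k)$, since $\det(\mathbf{Q}^\intercal\mathbf{M}\mathbf{Q})=\det\mathbf{M}$; this confirms that $g$ is well defined on the Grassmannian and that the problem genuinely concerns subspaces rather than frames. Writing $\mathbf{A} = \boldsymbol{\Sigma}^{-1}$, whose eigenvalues in increasing order are $a_i = \lambda_i^{-1}$ with eigenvectors $\mathbf{v}_i$, the task becomes: minimize $\det(\mathbf{X}^\intercal\mathbf{A}\mathbf{X})$ over orthonormal $k$-frames.

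For the lower bound I would invoke the Poincaré separation theorem: if $\mu_1\le\cdots\le\mu_k$ are the eigenvalues of $\mathbf{X}^\intercal\mathbf{A}\mathbf{X}$, then $\mu_i \ge a_i$ for each $i$. Since all quantities are positive, this yields $g(\mathbf{X}) = \prod_{i=1}^k\mu_i \ge \prod_{i=1}^k a_i = \prod_{i=1}^k\lambda_i^{-1}$. The bound is attained whenever $\text{range}(\mathbf{X})$ is a principal subspace: such a subspace is $\mathbf{A}$-invariant, and $\mathbf{A}$ restricted to it has eigenvalues exactly $a_1,\dots,a_k$, so $\det(\mathbf{X}^\intercal\mathbf{A}\mathbf{X}) = \prod_{i=1}^k a_i$. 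In particular $\mathscr{V}_k$ is a minimizer of $g$, hence a maximizer of the PDF, which establishes the first claim; the same computation shows that every $k$-dim principal subspace is a global maximizer when the eigenvalues are degenerate at the cut, so the set of maximizers is exactly the set of principal subspaces.

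The delicate part---and the step I expect to be the main obstacle---is uniqueness under the gap $\lambda_k > \lambda_{k+1}$, i.e.\ $a_k < a_{k+1}$. Here I would analyze the equality case. Because $\mu_i \ge a_i > 0$ for every $i$, the product $\prod_{i=1}^k\mu_i$ equals its lower bound $\prod_{i=1}^k a_i$ only if $\mu_i = a_i$ for all $i$, since a single strict inequality would make the product strictly larger. Matching all $k$ eigenvalues forces in particular $\tr(\mathbf{X}^\intercal\mathbf{A}\mathbf{X}) = \sum_{i=1}^k\mu_i = \sum_{i=1}^k a_i$, so $\text{range}(\mathbf{X})$ attains the minimum in the Ky Fan trace problem. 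It then remains to invoke uniqueness of the Ky Fan minimizer: when $a_k < a_{k+1}$, the only $k$-dim subspace achieving $\sum_{i=1}^k a_i$ is the span of $\mathbf{v}_1,\dots,\mathbf{v}_k$, namely $\mathscr{V}_k$, which pins down the unique mode. The crux is thus the passage from determinant-equality to subspace-identity: the interlacing theorem supplies the global lower bound cheaply, but its equality conditions---routed here through the trace/Ky Fan characterization---are where the spectral-gap hypothesis is actually used, and care is needed to rule out any partially-mixed subspace sneaking in at equality when $a_k = a_{k+1}$.
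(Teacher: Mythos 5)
Your proof is correct, but it follows a genuinely different route from the paper's. You reduce the problem to minimizing $\det(\mathbf{X}^\intercal\boldsymbol{\Sigma}^{-1}\mathbf{X})$ and settle it with two classical linear-algebra facts: the Poincar\'e separation (interlacing) theorem gives the global lower bound $\prod_{i=1}^k\mu_i \ge \prod_{i=1}^k \lambda_i^{-1}$ at once, and the equality case is funneled through the Ky Fan trace characterization, whose uniqueness under the gap $\lambda_k>\lambda_{k+1}$ pins down $\mathscr{V}_k$. The paper instead runs a variational argument on the manifold: it computes the first-order conditions of $\log|\mathbf{X}^\intercal\boldsymbol{\Lambda}^{-1}\mathbf{X}|$ using the tangent-space structure of $\mathrm{St}(n,k)$, shows the critical points are exactly the invariant $k$-subspaces of $\boldsymbol{\Lambda}$, compares the density values at these critical points (using compactness of $\mathrm{Gr}(n,k)$ for existence of a maximizer), and then exhibits an explicit rotation trajectory along which the density increases monotonically away from any non-principal invariant subspace. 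Your approach is shorter and more elementary---no manifold calculus, and the global bound plus its equality case immediately identify the full set of global maximizers, including the degenerate case $\lambda_k=\lambda_{k+1}$. What the paper's heavier machinery buys is a complete description of the critical-point landscape: it proves that non-principal invariant subspaces are not even \emph{local} maxima, so the density has no spurious local modes.

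That last point is the one caveat worth flagging. The paper's phrasing (``unique mode---and therefore a unique global maximal point'') treats a mode as a local maximum of the density, which is strictly more than a global maximizer. Your interlacing-plus-Ky-Fan argument identifies the global maximizers exactly but says nothing about other local maxima; to match the paper's conclusion under that reading, you would still need an argument like the paper's rotation trajectory (or a second-order analysis) showing that any local maximum must attain the global bound. If ``mode'' is read as ``global maximizer of the density,'' your proof is complete as written.
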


\begin{proof}
  Let us first consider a special case when $\boldsymbol{\Sigma}$ is the diagonal matrix
  $\boldsymbol{\Lambda} = \diag(\lambda_1, \cdots, \lambda_n)$.
  Let subspace $\mathscr{X} \sim \text{MACG}_{n,k}(\boldsymbol{\Lambda})$
  and have an orthonormal basis $\mathbf{X} \in \text{St}(n,k)$.
  The PDF can be written as:
  \begin{equation}
    p(\mathscr{X}) =
    |\boldsymbol{\Lambda}|^{-k/2} |\mathbf{X}^\intercal \boldsymbol{\Lambda}^{-1}\mathbf{X}|^{-n/2} =
    \left(\prod_{i=1}^n \lambda_i \right)^{-k/2} |\mathbf{X}^\intercal \boldsymbol{\Lambda}^{-1}\mathbf{X}|^{-n/2}.
  \end{equation}
  We see that the PDF is a smooth function of $\mathbf{X}$, and we first characterize its critical points.
  Taking the logarithm of the above equation, we get:
  \begin{equation}\label{eq:objective-function-mode-MACG}
    \log p(\mathscr{X}) = -\frac{n}{2} \log |\mathbf{X}^\intercal \boldsymbol{\Lambda}^{-1}\mathbf{X}| - \frac{k}{2} \sum_{i=1}^{n} \log \lambda_i.
  \end{equation}
  The second term $\frac{k}{2} \sum_{i=1}^{n} \log \lambda_i$ is a constant,
  so the critical points of $p(\mathscr{X})$ are the critical points of
  $\log |\mathbf{X}^\intercal \boldsymbol{\Lambda}^{-1} \mathbf{X}|$,
  which can be computed by setting $ \tr(\mathbf{S}^{-1} \partial \mathbf{S}) = 0$
  where $\mathbf{S} = \mathbf{X}^\intercal \boldsymbol{\Lambda}^{-1} \mathbf{X}$.
  The partial derivative of $\mathbf{X}^\intercal \boldsymbol{\Lambda}^{-1} \mathbf{X}$ can be written as
  $\partial (\mathbf{X}^\intercal \boldsymbol{\Lambda}^{-1} \mathbf{X}) = (\partial \mathbf{X})^\intercal \boldsymbol{\Lambda}^{-1} \mathbf{X} + \mathbf{X}^\intercal \boldsymbol{\Lambda}^{-1} (\partial \mathbf{X})$.
  The differential $\partial \mathbf{X} \in T_{\mathbf{X}} \text{St}(n,k)$,
  meaning that it lies in the tangent space of the Stifel manifold $\text{St}(n,k)$ at the point $\mathbf{X}$,
  which must satisfy $(\partial \mathbf{X})^\intercal \mathbf{X} + \mathbf{X}^\intercal (\partial \mathbf{X}) = 0$.
  In other words, $(\partial \mathbf{X})^\intercal \mathbf{X}$ must be antisymmetric,
  which allows us to write $\partial \mathbf{X}$ in a projective form:
  $\partial \mathbf{X} = \mathbf{M} - \mathbf{X} \sym(\mathbf{X}^\intercal \mathbf{M})$
  where $\mathbf{M} \in \mathbb{R}^{n \times k}$ and $\text{sym}(\mathbf{A}) = (\mathbf{A} + \mathbf{A}^\intercal) / 2$.
  Now we have:
  \begin{equation}
    \begin{aligned}
      \tr(\mathbf{S}^{-1} \partial \mathbf{S})
      &= \tr \{ (\mathbf{X}^\intercal \boldsymbol{\Lambda}^{-1} \mathbf{X})^{-1} [(\partial \mathbf{X})^\intercal \boldsymbol{\Lambda}^{-1} \mathbf{X} + \mathbf{X}^\intercal \boldsymbol{\Lambda}^{-1} (\partial \mathbf{X})]\}\\
      &= \tr\{(\partial \mathbf{X})^\intercal \boldsymbol{\Lambda}^{-1} \mathbf{X} (\mathbf{X}^\intercal \boldsymbol{\Lambda}^{-1}\mathbf{X})^{-1} + (\mathbf{X}^\intercal \boldsymbol{\Lambda}^{-1}\mathbf{X})^{-1}\mathbf{X}^\intercal \boldsymbol{\Lambda}^{-1}(\partial \mathbf{X})\}\\
      &= 2\tr\{(\mathbf{X}^\intercal \boldsymbol{\Lambda}^{-1}\mathbf{X})^{-1}\mathbf{X}^\intercal \boldsymbol{\Lambda}^{-1}(\partial \mathbf{X})\}\\
      &= 2\tr\{(\mathbf{X}^\intercal \boldsymbol{\Lambda}^{-1}\mathbf{X})^{-1}\mathbf{X}^\intercal \boldsymbol{\Lambda}^{-1}[\mathbf{M} - \mathbf{X} \sym(\mathbf{X}^\intercal \mathbf{M})]\}\\
      &= 2\tr\{(\mathbf{X}^\intercal \boldsymbol{\Lambda}^{-1}\mathbf{X})^{-1}\mathbf{X}^\intercal \boldsymbol{\Lambda}^{-1}\mathbf{M} - \sym(\mathbf{X}^\intercal \mathbf{M})\}\\
      &= 2\tr\{(\mathbf{X}^\intercal \boldsymbol{\Lambda}^{-1}\mathbf{X})^{-1}\mathbf{X}^\intercal \boldsymbol{\Lambda}^{-1}\mathbf{M} - \mathbf{X}^\intercal \mathbf{M}\}\\
      &= 2 \, \text{sum}\{[(\mathbf{X}^\intercal \boldsymbol{\Lambda}^{-1}\mathbf{X})^{-1}\mathbf{X}^\intercal \boldsymbol{\Lambda}^{-1} - \mathbf{X}^\intercal] \circ \mathbf{M}\}.
    \end{aligned}
  \end{equation}
  With $\tr(\mathbf{S}^{-1} \partial \mathbf{S}) = 0$ for all $\mathbf{M} \in \mathbb{R}^{n \times k}$,
  we have $(\mathbf{X}^\intercal \boldsymbol{\Lambda}^{-1}\mathbf{X})^{-1}\mathbf{X}^\intercal \boldsymbol{\Lambda}^{-1} - \mathbf{X}^\intercal = 0$,
  that is, $\mathbf{X} = \boldsymbol{\Lambda}^{-1}\mathbf{X}(\mathbf{X}^\intercal \boldsymbol{\Lambda}^{-1}\mathbf{X})^{-1}$.
  Further simplification gives us
  $(\mathbf{X}\mathbf{X}^\intercal)(\boldsymbol{\Lambda}^{-1}\mathbf{X}) = (\boldsymbol{\Lambda}^{-1}\mathbf{X})$.
  Because $\mathbf{X}\mathbf{X}^\intercal$ is the orthogonal projection onto $\text{range}(\mathbf{X})$,
  we have $\text{range}(\boldsymbol{\Lambda}^{-1}\mathbf{X}) \subseteq \text{range}(\mathbf{X})$;
  since both sides of the equation are $k$-dim subspaces,
  we have $\text{range}(\boldsymbol{\Lambda}^{-1}\mathbf{X}) = \text{range}(\mathbf{X})$.
  Therefore, $\mathscr{X} = \text{range}(\mathbf{X})$ is an invariant $k$-subspace of $\boldsymbol{\Lambda}^{-1}$,
  or equivalently, an invariant $k$-subspace of $\boldsymbol{\Lambda}$.
  If the eigenvalues $(\lambda_i)_{i=1}^{n}$ are distinct,
  then $\mathscr{X}$ must contain a $k$-combination of the standard basis $\{\mathbf{e}_i\}_{i=1}^{n}$.

  So far we have proved that the critical points of $p(\mathscr{X})$ are the invariant $k$-subspaces of $\boldsymbol{\Lambda}$.
  Since the Grassmann manifold $\text{Gr}(n, k)$ is compact and without a boundary,
  the smooth PDF $p(\mathscr{X})$ has a global minimum and a global maximum,
  both of which are critical points.
  We define two $k$-dim subspaces $\overline{\mathscr{E}}_k$ and $\underline{\mathscr{E}}_k$
  as the ranges of the bases $\overline{\mathbf{E}}_k := [\mathbf{e}_1 \cdots \mathbf{e}_k]$
  and $\underline{\mathbf{E}}_k := [\mathbf{e}_{n - k + 1} \cdots \mathbf{e}_n]$, respectively.
  We have:
  \begin{equation}
    \begin{aligned}
      \log p(\overline{\mathscr{E}}_k)
      &= -\frac{n}{2} \log |\overline{\mathbf{E}}_{k}^\intercal \boldsymbol{\Lambda}^{-1} \overline{\mathbf{E}}_k| - \frac{k}{2} \sum_{i=1}^{n} \log \lambda_i \\
      &= -\frac{n}{2} \log |\diag(\lambda_1^{-1}, \cdots, \lambda_k^{-1})| - \frac{k}{2} \sum_{i=1}^{n} \log \lambda_i \\
      &= -\frac{n}{2} \sum_{i=1}^{k} \log \lambda_{i}^{-1} - \frac{k}{2} \sum_{i=1}^{n} \log \lambda_i \\
      &= \frac{n}{2} \sum_{i=1}^{k} \log \lambda_{i} - \frac{k}{2} \sum_{i=1}^{n} \log \lambda_i.
    \end{aligned}
  \end{equation}
  Similarly, we have $\log p(\underline{\mathscr{E}}_k) = \frac{n}{2} \sum_{i=n-k+1}^{n} \log \lambda_{i} - \frac{k}{2} \sum_{i=1}^{n} \log \lambda_i$.
  In general, let $\lambda_{a_1} \ge \cdots \ge \lambda_{a_k}$ be the eigenvalues associated with
  an invariant $k$-subspace $\mathscr{X}_0$, then:
  \begin{equation}
    \log p(\mathscr{X}_0) = \frac{n}{2} \sum_{j=1}^{k} \log \lambda_{a_j} - \frac{k}{2} \sum_{i=1}^{n} \log \lambda_i.
  \end{equation}
  Since $\lambda_1 \ge \cdots \ge \lambda_n$, we have
  $p(\underline{\mathscr{E}}_k) \leq p(\mathscr{X}_0) \leq p(\overline{\mathscr{E}}_k)$,
  for all critical points $\mathscr{X}_0$.
  This means that $\underline{\mathscr{E}}_k$ and $\overline{\mathscr{E}}_k$
  are the global minimal point and the global maximal point of $p(\mathscr{X})$, respectively;
  that is, $p(\underline{\mathscr{E}}_k) \leq p(\mathscr{X}) \leq p(\overline{\mathscr{E}}_k)$.

  Now we show that a critical point $\mathscr{X}_0$ cannot be a local maximum unless
  $\{\lambda_{a_1}, \cdots, \lambda_{a_k}\} = \{\lambda_1, \cdots, \lambda_k\}$.
  Let $\mathbf{v}_{a_1}, \cdots, \mathbf{v}_{a_k}$ be eigenvectors associated with eigenvalues
  $\lambda_{a_1}, \cdots, \lambda_{a_k}$, respectively,
  such that they form an orthonormal basis of $\mathscr{X}_0$.
  Assume that eigenvalue $\lambda' \notin \{\lambda_{a_1}, \cdots, \lambda_{a_k}\}$ and $\lambda' > \lambda_{a_k}$.
  Let $\mathbf{v}'$ be an eigenvector associated with the eigenvalue $\lambda'$.
  Consider the trajectory $\mathscr{X}(\theta) = \text{range}(\mathbf{X}(\theta))$,
  where $\mathbf{X}(\theta) = [\mathbf{v}_{a_1} \, \cdots \, \mathbf{v}_{a_{k-1}} \, \mathbf{v}(\theta)]$,
  $\mathbf{v}(\theta) = \cos(\theta) \mathbf{v}_{a_k} + \sin(\theta) \mathbf{v}'$,
  and $\theta \in [0, \frac{\pi}{2}]$.
  It is a constant-speed rotation that starts from $\mathscr{X}_0$.
  We can show that:
  \begin{equation}
    \log p(\mathscr{X}(\theta)) = \frac{n}{2} \left( \sum_{j=1}^{k-1} \log \lambda_{a_j}
      - \log \left( \lambda_{a_k}^{-1} \cos^2\theta + \lambda'^{-1} \sin^2\theta \right) \right)
    - \frac{k}{2} \sum_{i=1}^{n} \log \lambda_i,
  \end{equation}
  which increases monotonically.
  Therefore, $\mathscr{X}_0$ is not a local maximum.

  In summary, $\text{MACG}_{n,k}(\boldsymbol{\Lambda})$ has its global maximum at $\overline{\mathscr{E}}_k$,
  or any $k$-dim principal subspace of $\boldsymbol{\Lambda}$.
  These subspaces are also the modes of the distribution.
  If $\lambda_k > \lambda_{k+1}$, there is a unique $k$-dim principal subspace $\overline{\mathscr{E}}_k$,
  which is the unique mode of the distribution.

  For a general $\boldsymbol{\Sigma}$, it has an eigenvalue decomposition
  $\boldsymbol{\Sigma} = \mathbf{V} \boldsymbol{\Lambda} \mathbf{V}^\intercal$
  where $\mathbf{V} = [\mathbf{v}_1 \, \cdots \mathbf{v}_k]$.
  Given a change of basis $\mathbf{x} = \mathbf{V} \mathbf{z}$, the previous arguments still hold,
  and the theorem follows immediately.
\end{proof}

\bibliographystyle{myelsarticle-num-names}
{\bibliography{SROM}}

\end{document}